\declaretheorem[name=Lemma]{lemma}
\definecolor{Darkblue}{rgb}{0,0,0.4}
\definecolor{Brown}{cmyk}{0,0.61,1.,0.60}
\definecolor{Purple}{cmyk}{0.45,0.86,0,0}
\newtheorem{theorem}{Theorem}
\newtheorem{corollary}{Corollary}
\newtheorem{observation}[lemma]{Observation}
\newtheorem{question}{Question}
\newcommand{\ddim}{{\rm ddim}}
\newcommand{\R}{\mathbb{R}}
\newcommand{\etal}{\emph{et. al.}}
\newcommand{\eps}{\epsilon}
\newcommand{\poly}{{\rm poly}}
\newcommand{\mommit}[1]{}
\newcommand{\namedref}[2]{\hyperref[#2]{#1~\ref*{#2}}}
\newcommand{\sectionref}[1]{\namedref{Section}{#1}}
\newcommand{\theoremref}[1]{\namedref{Theorem}{#1}}
\newcommand{\questionref}[1]{\namedref{Question}{#1}}
\newcommand{\figureref}[1]{\namedref{Figure}{#1}}
\newcommand{\algref}[1]{\namedref{Algorithm}{#1}}
\newcommand{\lemmaref}[1]{\namedref{Lemma}{#1}}
\newcommand{\observationref}[1]{\namedref{Observation}{#1}}
\newcommand{\corollaryref}[1]{\namedref{Corollary}{#1}}
\newcommand {\ignore} [1] {}
\begin{document}
\author[1]{Arnold Filtser}
\author[2]{Shay Solomon}
\affil[1]{Columbia University\thanks{The work was done while the author was affiliated with Ben-Gurion University of the Negev. Work supported in part by Simons Foundation, ISF grant 1817/17, and by BSF Grant 2015813.}. Email: \texttt{arnold273@gmail.com}}
\affil[2]{Tel Aviv University\thanks{Shay Solomon is partially supported by the Israel Science Foundation grant No.1991/19 and by Len Blavatnik and the Blavatnik Family foundation.
		Part of this work was carried out while this author was supported by the Rothschild Postdoctoral Fellowship, the Fulbright Postdoctoral Fellowship and the IBM Herman Goldstine Postdoctoral Fellowship.}. Email: \texttt{solo.shay@gmail.com}}

  \title{The Greedy Spanner is Existentially Optimal\footnote{A preliminary version of this paper appeared in PODC 2016 \cite{FS16}.}}
  \maketitle

\begin{abstract}
The greedy spanner is arguably the simplest and most well-studied spanner construction.
Experimental results demonstrate that it is at least as good as any other spanner construction, in terms of both the size and weight parameters.
However, a rigorous proof for this statement has remained elusive.

In this work we fill in the theoretical gap via a surprisingly simple observation: The greedy spanner is \emph{existentially optimal} (or existentially near-optimal)
for several important graph families, in terms of both the size and weight.
Roughly speaking, the greedy spanner is said to be existentially optimal (or near-optimal)
for a graph family $\mathcal G$ if the worst performance of the greedy spanner over all graphs in $\mathcal G$
is just as good (or nearly as good) as the worst performance of an optimal spanner over all graphs in $\mathcal G$.

Focusing on the weight parameter, the state-of-the-art spanner constructions for both general graphs
(due to Chechik and Wulff-Nilsen [SODA'16]) and doubling metrics (due to Gottlieb [FOCS'15]) are complex. Plugging our observation on these results, we conclude that the greedy spanner achieves near-optimal weight guarantees for both general graphs and doubling metrics, thus resolving two longstanding conjectures in the area.

Further, we observe that approximate-greedy spanners are existentially near-optimal as well.
Consequently, we provide an $O(n \log n)$-time construction of $(1+\epsilon)$-spanners for doubling metrics with constant lightness and degree.
Our construction improves Gottlieb's construction, whose runtime is $O(n \log^2 n)$ and whose number of edges and degree are unbounded,
and remarkably, it matches the state-of-the-art Euclidean result (due to  Gudmundsson et al.\ [SICOMP'02]) in all the involved parameters
(up to dependencies on $\epsilon$ and the dimension).
\end{abstract}
\section{Introduction}
\vspace{3pt}
{\bf 1.1~ Graph Spanners.~}
Given a (connected and undirected) $n$-vertex $m$-edge graph $G = (V,E,w)$ with positive edge weights and a parameter $t \ge 1$, a subgraph $H = (V,E',w)$ of $G$
($E' \subseteq E$) is called a \emph{$t$-spanner} for $G$ if for all $u,v \in V$, $\delta_H(u,v) \le t \cdot \delta_G(u,v)$.
(Here $\delta_G(u,v)$ and $\delta_H(u,v)$ denote the distances between $u$ and $v$ in the graphs $G$ and $H$, respectively.)
The parameter $t$ is called the \emph{stretch} of $H$.\footnote{More accurately, the \emph{stretch} of $H$ is the minimum number $t'$ such that $H$ is a $t'$-spanner for $G$, hence $t$ is in fact an upper bound on the stretch of $H$. However, referring to $t$ as the stretch of $H$ is a standard terminology in the area and is technically more convenient when the focus is existential bounds on spanner properties that depend on the stretch parameter, as in the current work.}
Spanners constitute a fundamental graph structure, and have been extensively and intensively studied since they were introduced \cite{PS89,PU89}.

In many practical applications one is required to construct a spanner that satisfies a number of useful properties, while preserving a small stretch.
First, the spanner $H$ should have a small number of edges.
Second, its \emph{weight} $w(H) = \sum_{e \in E} w(E)$ should be close to the weight of a minimum spanning tree (MST) of the graph $G$.
We henceforth refer to the normalized notion of weight $\Psi(H) = \frac{w(H)}{w(MST(G))}$, which is called \emph{lightness};
a \emph{light} spanner is one with small lightness.
Third, its \emph{degree} $\Delta(H)$, defined as the maximum number of edges incident on a vertex, should be small.

Light and sparse spanners are particularly useful for efficient broadcast protocols in the message-passing model of distributed computing \cite{ABP90,ABP91},
where efficiency is measured with respect to both the total communication cost (corresponding to the spanner's size and weight) and the speed of message delivery at all destinations (corresponding to the spanner's stretch).
Additional applications of such spanners in distributed systems include network synchronization and computing global functions \cite{Awerbuch85,PU89,ABP90,ABP91,Peleg00}. 
Light and sparse spanners were also found useful for various data gathering and dissemination tasks in overlay networks \cite{BKRCV02,VWFME03,KV01}, 
in wireless and sensor networks \cite{RW04,BDS04,SS10}, 
for VLSI circuit design \cite{CKRSW91,CKRSW292,CKRSW92,SCRS01},
for routing \cite{WCT02,PU89,PU89b,TZ01}, 
to compute distance oracles and labels \cite{Peleg00Prox,TZ01b,RTZ05}, 
and to compute almost shortest paths \cite{Coh98,RZ11,Elkin05,EZ06,FKMSZ05}. 
Low degree spanners are also very useful in many of these applications.
For example, the degree of the spanner is what determines local memory constraints when using
spanners to construct network synchronizers  and efficient broadcast protocols.
In compact routing schemes, the use of low degree spanners enables the routing tables to be of small size.
More generally, viewing vertices as processors, in many applications the degree of a processor
represents its \emph{load}, hence a low degree spanner guarantees that the load on all the processors in the network will be low.

The \emph{greedy spanner} by Alth$\ddot{\mbox{o}}$fer  et al.\ \cite{ADDJS93} is arguably the simplest and most well-studied spanner construction.    			
Alth$\ddot{\mbox{o}}$fer  et al.\ showed that for every  weighted $n$-vertex graph $G = (V,E,w)$ and an integer parameter $k \ge 1$,
the \emph{greedy algorithm} (see \algref{fig:greedy_alg}) constructs a $(2k-1)$-spanner with $O(n^{1+1/k})$ edges;
assuming Erd\H{o}s' girth conjecture \cite{erd64}, this size bound is asymptotically tight.
Alth$\ddot{\mbox{o}}$fer  et al.\ also showed that the lightness of the greedy spanner is $O(n/k)$.
Chandra et al.\ \cite{CDNS92} improved the lightness bound, and showed that the greedy spanner for stretch parameter $t=(2k-1)\cdot(1+\epsilon)$
(here $k>1$, $\epsilon>0$) has lightness $O(k \cdot n^{1/{k}} \cdot (1/\epsilon)^{1+1/k})$.
Two decades later, Elkin, Neiman and the second author \cite{ENS14} improved the analysis of \cite{CDNS92} and showed that the greedy $(2k-1)\cdot(1+\epsilon)$-spanner has lightness $O(n^{1/k}\cdot (1+ k/(\epsilon^{1+1/k}\log k)))$.
In a very recent breakthrough, Chechik and Wulff-Nilsen \cite{CW18} improved the lightness bound all the way to $O(n^{1/k} (1/\eps)^{3+2/k})$.
Assuming Erd\H{o}s' girth conjecture \cite{erd64} and ignoring dependencies on $\epsilon$, the bound of \cite{CW18} on the lightness is asymptotically tight,
thus resolving a major open question in this area. However, the result of Chechik and Wulff-Nilsen \cite{CW18}
is not due to a refined analysis of the greedy spanner. Instead, they devised a different construction, which is far more complex, and
bounded the lightness of their own construction. The following question was left open.
\begin{question} \label{question1}
	Is the lightness analysis of \cite{ENS14} for the greedy spanner optimal, or can one refine it to derive a stronger bound?
	In particular, is the spanner of \cite{CW18} lighter than the greedy spanner?
\end{question}
\vspace{2pt}
\noindent{\bf 1.2~ Spanners for Euclidean and Doubling Metrics.~}
Consider a set $P$ of $n$ points in $\mathbb R^d$, $d \ge 2$, and a stretch parameter $t \ge 1$.
A graph $G = (P,E,w)$ in which the weight $w(p,q)$ of each edge $e = (p,q) \in E$
is equal to the Euclidean distance $\|p-q\|$ between $p$ and $q$ is called a \emph{Euclidean graph}.
We say that the Euclidean graph $G$ is a \emph{$t$-spanner} for $P$ (or equivalently, for the corresponding Euclidean metric $(P,\|\cdot\|)$) if for every pair $p,q \in P$ of distinct points, there exists a path $\Pi(p,q)$ in $G$
between $p$ and $q$ whose weight (i.e., the sum of all edge weights in it) is at most $t \cdot \|p-q\|$.
The path $\Pi(p,q)$ is said to be a \emph{$t$-spanner path} between $p$ and $q$.
For Euclidean metrics, one usually focuses on the regime $t = 1+\eps$, for $\eps > 0$ being an arbitrarily small parameter.
Euclidean spanners were introduced by Chew \cite{Chew86}, and were subject to intensive ongoing research efforts since then.
We refer to the book ``Geometric Spanner Networks'' \cite{NS07}, which is devoted almost exclusively to Euclidean spanners and their numerous applications.
As with general graphs, it is important to devise Euclidean spanners that achieve small size, lightness and degree.

The \emph{doubling dimension} of a metric space $(M,\delta)$ is the smallest value $\ddim$
such that every ball $B$ in the metric space can be covered by at most
$2^{\ddim}$ balls of half the radius of $B$.
This notion generalizes the Euclidean dimension, since the doubling dimension
of the Euclidean space $\mathbb R^d$ is $\Theta(d)$.
A metric space is called \emph{doubling} if its doubling dimension is constant.
Spanners for doubling metrics were also subject of intensive research  \cite{GGN04,CGMZ16,CG09,HM06,Roditty12,GR081,GR082,Smid09,ES15,Sol14}.
The basic line of work in this context is to generalize the known Euclidean spanner results for arbitrary doubling metrics.

Das et al.\ \cite{DHN93} showed that, in low-dimensional Euclidean metrics $\R^d$, the greedy $(1+\eps)$-spanner has constant degree (and so $O(n)$ edges)
and $O(1/\eps)^{2d})$ lightness.
In $n$-point doubling metrics, the greedy $(1+\eps)$-spanner has $O(n)$ edges and lightness $O(\log n)$ \cite{Smid09}.
As for the degree, there exist $n$-point metric spaces with doubling dimension 1 for which the greedy spanner has a degree of $n-1$ \cite{HM06,Smid09}.
It has been a major open question to determine whether any doubling metric admits
a $(1+\eps)$-spanner  with sub-logarithmic lightness.
A breakthrough paper of Gottlieb \cite{Got15} answered this fundamental question in the affirmative by devising such a spanner construction with constant lightness.
Again, this result is not due to a refined analysis of the greedy spanner. Instead, Gottlieb devised a different construction,
which is far more complex, and bounded the lightness of his own construction. The following question was left open.
\begin{question} \label{question2}
	Is the lightness analysis of \cite{Smid09} for the greedy spanner optimal, or can one refine it to derive a stronger bound?
	In particular, is the spanner of \cite{Got15} lighter than the greedy spanner?
\end{question}

The relatively high runtime of the greedy spanner is a drawback.
The state-of-the-art implementation of the greedy spanner in both Euclidean and doubling metrics requires time $O(n^2 \log n)$ \cite{BCFMS10} (although there are some heuristics that might be useful in practice \cite{ABBB15,ABBB17}).
Building on \cite{DHN93}, Das and Narasimhan \cite{DN97} devised a much faster algorithm that follows the greedy approach.
The runtime of their ``approximate-greedy'' algorithm is $O(n \log^2 n)$, yet its degree and lightness are both bounded by constants (as with the greedy spanner).
Gudmundsson et al.\ \cite{GLN02} improved the result of \cite{DN97}, implementing the approximate-greedy algorithm within time $O(n \log n)$.
For doubling metrics, however, the only spanner construction with sub-logarithmic lightness is that of   \cite{Got15};
the runtime of Gottlieb's construction is $O(n \log^2 n)$ rather than $O(n \log n)$, and the size and degree of his construction are unbounded.
Hence, there is a big gap in this context between Euclidean and doubling metrics, leading to the following question.
\begin{question} \label{question3}
	Can one compute $(1+\eps)$-spanners with constant lightness in doubling metrics  within time $O(n \log n)$?
	Furthermore, can one extend the state-of-the-art Euclidean result of \cite{GLN02} to arbitrary doubling metrics?
\end{question}

There have been numerous experimental studies on Euclidean spanners. (See \cite{FG05,Far08}, and the references therein.)
The conclusion emerging from these experiments is that the greedy Euclidean spanner outperforms the other popular Euclidean spanner constructions, with respect to the
size and lightness bounds. (Specifically, the greedy spanner was found to be $10$ times sparser and $30$ times lighter than any other examined spanner.)
It is reasonable to assume that a similar situation occurs in arbitrary doubling metrics.
\vspace{6pt}
\\
\noindent{\bf 1.3~  Our Contribution.~}
In this work we fill in the theoretical gap by making three important observations.
\begin{enumerate}
	\item Our first observation is surprisingly simple: The greedy spanner is \emph{existentially optimal} with respect to both the size and the lightness,
	for any graph family  that is closed under edge removal.
	
	\vspace{5pt}	
	Applying this observation to the family of general weighted graphs, we conclude that the greedy spanner is just as light as the spanner of \cite{CW18}, thus answering \questionref{question1}.
	
	\vspace{5pt}
	Moreover, it is known that the greedy spanner 
	can be easily implemented within time $O(m(n^{1 + 1/k} + n\log n))$ (cf.\ \cite{ES16}),
	and is thus much faster than the complex algorithm of  \cite{CW18}.
	(Although the runtime of the algorithm of \cite{CW18} is not analyzed explicitly therein,
	a naive implementation of that algorithm,
	which involves diameter computations of carefully selected subgraphs, incurs a runtime of $\Omega(m^2 n)$.)
	We remark that all faster spanner constructions (e.g. \cite{BS07,ES16,MPVX15,EN17,ADFSW19}) achieve a worse lightness bound than that of the greedy spanner.
	Consequently, the greedy algorithm enjoys the fastest known runtime of any $(2k-1)(1+\eps)$-spanner with $O(n^{1+1/k})$ edges and  lightness $O(n^{1/k} (1/\eps)^{3+2/k})$,
	or in other words, it is the fastest algorithm for constructing spanners that are near-optimal with respect to all the involved parameters (stretch, size and lightness).
	\vspace{6pt}
	\item The first observation does not hold for doubling metrics.
	Our second observation is that the greedy spanner is existentially \emph{near}-optimal with respect to both the size and the lightness,
	for the family of doubling metrics.
	In particular, it is just as light as the spanner of \cite{Got15}, thus answering \questionref{question2}.
	\vspace{6pt}
	\item Our third observation concerns the optimality of the approximate-greedy algorithm of \cite{DN97,GLN02} in doubling metrics, and is more intricate than the first two observations.
	Informally, it states that the approximate-greedy spanner with stretch parameter $t$ is existentially near-optimal with respect to the lightness,
	for the family of doubling metrics, but when compared to spanners with a slightly smaller stretch parameter $t' < t$.
	This enables us to conclude that the   lightness of the approximate-greedy spanner is close to that of \cite{Got15}.
	In this way we manage to extend the state-of-the-art Euclidean result of \cite{GLN02} to arbitrary doubling metrics, thus answering
	\questionref{question3}.\footnote{The $O(n \log n)$ runtime bound of \cite{GLN02} holds in the traditional algebraic computation-tree model with
		the added power of indirect addressing. Our result applies with respect to the same computation model.}
	
	\vspace{5pt}	
	While our approximate-greedy spanner achieves the same lightness bound as that of Gottlieb's spanner \cite{Got15}, it has several important advantages over it.
	First, our algorithm is conceptually much simpler than that of \cite{Got15}.
	Second, our spanner construction has constant degree (ignoring dependencies on $\eps$ and on the doubling dimension), while the degree in \cite{Got15} is not analyzed, and naively it may be as large as $\Omega(n)$.
	Third, the degree bound of our spanner implies that it has $O(n)$ edges, while the size in \cite{Got15} is not analyzed,
	and it may be significantly larger than the size of our spanner.
	Finally, the construction time of \cite{Got15} is $O(n \log^2 n)$, while our construction time is $O(n \log n)$, which is considered the holy grail in the area of Euclidean and doubling spanners.
	
	\vspace{5pt}	
	Our third observation concerning the existential near-optimality of approximate-greedy spanner algorithms
	can be viewed as a \emph{general paradigm} for obtaining \emph{fast} spanner constructions that are both sparse and light.
	Although we apply it in this paper only to the family of doubling metrics, the same paradigm can be applied to other families of graphs, including general graphs, by adjusting it appropriately. In fact, the very recent work of Alstrup et. al. \cite{ADFSW19} follows this paradigm to obtain, for general graphs,
	either a $(2k-1)(1+\eps)$-spanner with $O(n^{1+1/k}\cdot\poly(\frac1\eps))$ edges and $O(n^{1/k}\cdot\poly(\frac1\eps))$ lightness
	in $O(n^{2+\zeta+1/k})$ time
	or an $O(k)$-spanner with the same size and lightness bounds in $O(n^{1+\zeta+1/k})$ time, where $\zeta > 0$ is a small constant.
\end{enumerate}

To summarize, by introducing and studying a new notion of optimality, \emph{existential (near-)optimality},
this paper provides an extremely simple yet powerful tool in the area of spanners.
We believe that the notion of existential optimality, defined formally below, is of fundamental importance, and we anticipate that it will find more applications, even outside the area of spanners.
\vspace{7pt}
\\
\noindent
{\bf Some definitions concerning existential optimality.~}
Although the meaning of \emph{existential optimality} can be   understood from the context, it is instructive to provide a formal definition.
Fix an arbitrary stretch parameter $t \ge 1$ and some graph family $\mathcal{G}$.
For a graph $G \in \mathcal{G}$, let $OPT^{sparse}_t(G)$ (respectively, $OPT^{light}_t(G)$)
denote the optimal size (resp., lightness) of any $t$-spanner for $G$,
and let $OPT^{sparse}_t(\mathcal{G}) = \max\{OPT^{sparse}_t(G) ~\vert~ G \in \mathcal{G}\}$
(resp., $OPT^{light}_t(\mathcal{G}) = \max\{OPT^{light}_t(G) ~\vert~ G \in \mathcal{G}\}$)
denote the maximum value $OPT^{sparse}_t(G)$ (resp., $OPT^{light}_t(G)$) over all graphs $G$ in $\mathcal{G}$.
The greedy $t$-spanner is said to be \emph{existentially optimal} with respect to the size (respectively, lightness) if
for any graph $G \in \mathcal{G}$, the size (resp., lightness) of the greedy $t$-spanner for $G$ does not exceed $OPT^{sparse}_t(\mathcal{G})$  (resp., $OPT^{light}_t(\mathcal{G})$).
This does not mean that the size (respectively, lightness) of the greedy  $t$-spanner
for any graph $G \in \mathcal{G}$ is bounded by $OPT^{sparse}_t(G)$ (resp., $OPT^{light}_t(G)$).
It simply means that there \emph{exists} a graph $G' \in \mathcal{G}$,
such that the size (resp., lightness) of the greedy $t$-spanner for $G$ is bounded by $OPT^{sparse}_t(G')$ (resp., $OPT^{light}_t(G')$).
In other words, the maximum size (resp., lightness) of the greedy $t$-spanner over all graphs in $\mathcal G$
is equal to the maximum size (resp., lightness) of an optimal $t$-spanner over all graphs in $\mathcal G$.

{For example, let $\mathcal{G}$ be the family of general weighted graphs on $n$ vertices, and let $H$ be an $n$-vertex dense graph of high girth,
	namely, with girth $t+2$ and $n^{1+\Theta(1/t)}$ edges, where all edge weights are 1.
	Also, let $S$ be a star on the same vertex set as $H$ rooted at an arbitrary vertex,
	so that all edges of $S$ that belong to $H$ have weight 1 and all edges of $S$ that do not belong to $H$ have weight $1+\eps$.
	Finally, let $G$ be the graph containing all edges of $H$ and all edges of $S$ with weight $1+\eps$.
	Note that the greedy $t$-spanner for $G$ includes all $n^{1+\Theta(1/t)}$ edges of the high girth graph $H$,
	whereas the optimal $t$-spanner (assuming $t \ge 2+2\eps$) consists of the edges of the star $S$, hence is much sparser and lighter.
	(See \figureref{fig:girth} for an illustration.)
	This example, however, does not contradict the existential optimality of the greedy spanner: Although
	the size (respectively, lightness) of the greedy $t$-spanner for $G$ exceeds $OPT^{sparse}_t(G)$ (resp., $OPT^{light}_t(G)$), it can be shown that it is equal to
	$OPT^{sparse}_t(H)$ (resp., $OPT^{light}_t(H)$), which, in turn, is bounded by $OPT^{sparse}_t(\mathcal{G})$  (resp., $OPT^{light}_t(\mathcal{G})$).}
\begin{figure}
	\begin{center}
		\includegraphics[width=0.41\textwidth]{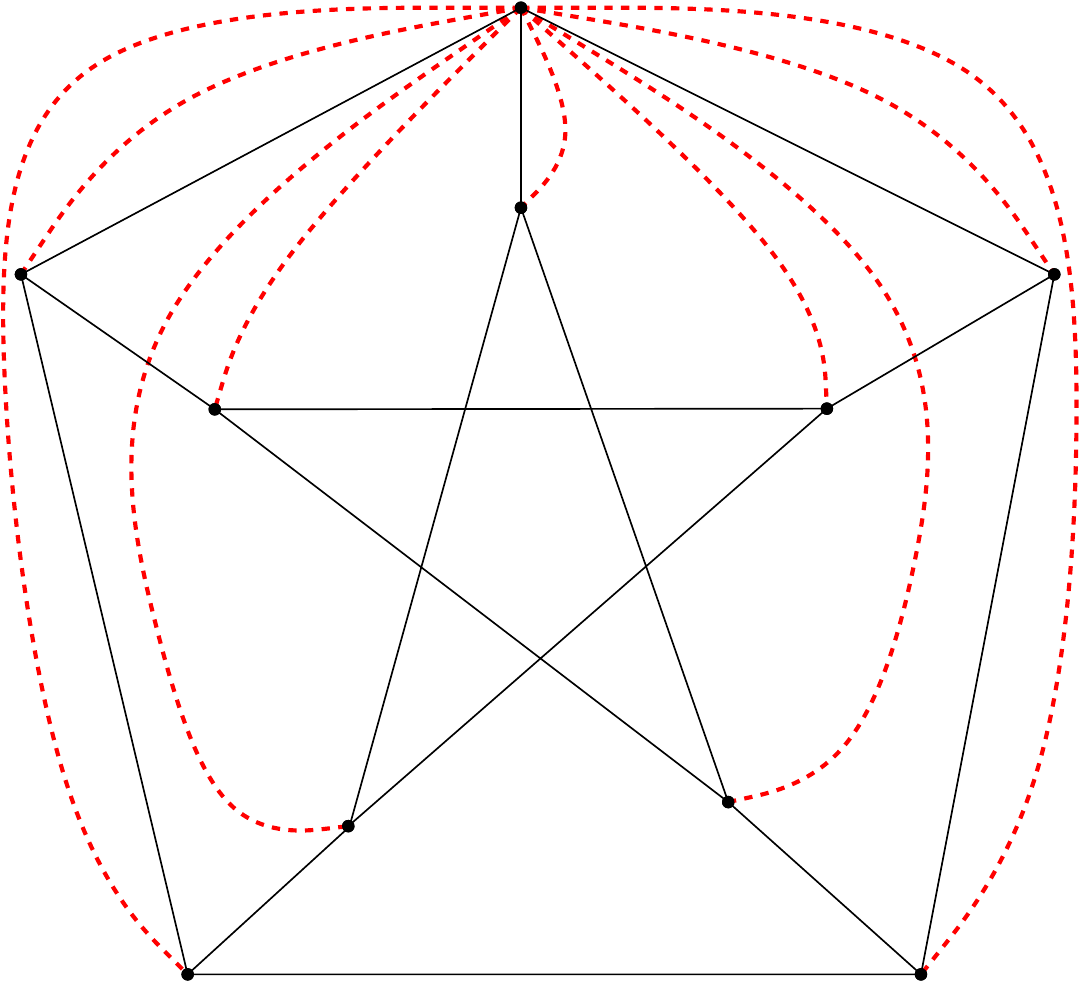}
		\caption{\small  The graph $H$ in the figure is the Petersen graph on 10 vertices, with girth 5 and 15 edges.
			All edges of $H$ have weight 1, and are colored black.
			The red dashed edges are the edges of the star $S$ of weight $1+\eps$.
			The graph $G$ is obtained as the union of the black and red edges in the figure.
			The greedy 3-spanner for $G$ includes all 15 edges of $H$, whereas the optimal 3-spanner for $G$ consists of the 9 edges of $S$.}				
		\label{fig:girth}
	\end{center}
\end{figure}

The meaning of \emph{existential near-optimality} is similar, except that we are allowed to have some \emph{slack},
which may depend on the stretch parameter $t$ as well as on parameters of the graph family of interest $\mathcal{G}$.
As mentioned, in our third observation we compare the lightness of the greedy spanner with a certain stretch parameter $t$
to the optimal lightness of any spanner, but with a slightly smaller stretch parameter $t'$.
This is just one example of how the slack parameter can be used.
Another example is to compare the greedy spanner in some graph family $\mathcal{G}$ to an optimal spanner, but with respect to a different (closely related) graph family $\mathcal{F'}$.
In particular, in our second and third observations we compare the lightness of the greedy spanner in metric spaces of bounded doubling dimension
to the optimal lightness of any spanner, but with respect to metric spaces of slightly larger doubling dimension.
It would be interesting to study additional ways of using the slack parameter,
as they may lead to new results in this area.

We remark that light spanners were extensively studied in various graph families such as planar graphs \cite{ADDJS93,Klein05}, apex graphs \cite{GS02},
bounded pathwidth graphs \cite{GH12}, bounded genus graphs \cite{Grigni00,GS02,DHM10}, bounded treewidth graphs \cite{DHM10},
and graphs excluding fixed minors \cite{Grigni00,DHM10,BLW17Minor}.
Since all these graph families are closed under edge removal, our first observation implies that the greedy spanner for them is just as good as any other spanner.
\vspace{6pt}
\\
\noindent{\bf 1.4~  Subsequent work.~}
The preliminary version of this paper appeared in PODC 2016 \cite{FS16}, and it triggered further work in the area.
We focus here on the two most relevant follow-up papers \cite{BLW19,LS19}. 

Borradaile, Le and Wulff{-}Nilsen \cite{BLW19}, improving over  \cite{Got15}, presented a construction of $(1+\eps)$-spanners for doubling metrics with lightness 
$(1/\eps)^{O(\ddim)}$; the improvement is in the dependence of the lightness bound on $\eps$ and $\ddim$ (see Section \ref{sebsec:DoublingPreliminaries} for the details).
Moreover, by building on our Theorem \ref{finish}, Borradaile et al.\  achieved an $O(n \log n)$-time algorithm for constructing $(1+\eps)$-spanners with lightness $(1/\eps)^{O(\ddim)}$.

Le and the second author \cite{LS19} studied the greedy spanner in $d$-dimensional Euclidean metrics and determined the exact asymptotic dependencies on $\eps$ and $d$ for 
both the size and lightness bounds for any $d = O(1)$ (disregarding polylogarithmic factors of $1/\eps$): $\Theta(n(1/\eps)^{d-1})$ edges and lightness $\Theta((1/\eps)^{d})$.
Moreover, Le and Solomon \cite{LS19} showed that Steiner points lead to a quadratic improvement in the size of Euclidean spanners.
\vspace{6pt}
\\
\noindent{\bf 1.5~  Organization.~}
In \sectionref{sec:pre} we present the notation that is used throughout the paper, and summarize some statements from previous work that are most relevant to us.
In \sectionref{sec:Greedy_optimal} we show that the greedy spanner is existentially optimal for graph families that are closed under edge removal.
The basic optimality argument of \sectionref{sec:Greedy_optimal} is extended to doubling metrics in \sectionref{sec:doubling}.
Finally, in \sectionref{sec:Fast_alg} we show that the approximate-greedy spanner in doubling metrics is light.

\section{Preliminaries}\label{sec:pre}
All the graphs considered in the paper are connected, undirected and with positive edge weights. Let $G=(V,E,w)$ be such a graph.
The weight $w(P)$ of a path $P$ is the sum of all edge weights in it, i.e., $w(P) = \sum_{e \in P} w(e)$.
For a pair of vertices $u,v \in V$, let $\delta_G(u,v)$ denote the distance between $u$ and $v$ in $G$, i.e., the weight of a shortest path between them.
We denote by $M_G = (V,\delta_G)$ the (shortest path) metric space \emph{induced} by $G$; we will view $M_G$ as a complete weighted graph $(V, {V \choose 2},w)$ over the vertex set $V$,
where the weight $w(u,v)$ of an edge $(u,v)$ is given by the graph distance $\delta_G(u,v)$ between its endpoints.
A subgraph $H=(V,E',w)$ of $G$ (where $E'\subseteq E$) is called a \emph{t-spanner} for $G$ if for all $u,v \in V$,  $\delta_H(u,v)\le t\cdot \delta_G(u,v)$.
The parameter $t$ is called the \emph{stretch} of the spanner $H$.
If $\delta_H(u,v)\le t\cdot \delta_G(u,v)$ for all edges $(u,v) \in E$, then
it also holds that $\delta_H(u,v)\le t\cdot \delta_G(u,v)$ for all pairs of vertices $u,v \in V$.
Therefore, to bound the stretch of the spanner, one may restrict the attention to the edges of the graph.
Let $|H| = |E'|$ denote the \emph{size} of $H$, and let $w(H) = w(E') = \sum_{e\in E'}w(e)$ denote its \emph{weight}.
The \emph{lightness} $\Psi(H)$ of $H$ is the ratio between the weight of $H$ and the weight of an MST for $G$, i.e., $\Psi(H)=\frac{w(H)}{w(MST(G))}$.
(Throughout the paper all logarithms are in base 2.)

We refer the reader to Section 1.3 for some definitions concerning the notion of existential optimality.

The result of  Chechik and Wulff{-}Nilsen \cite{CW18} is summarized in the following theorem.
\begin{theorem}[\cite{CW18}]\label{thm:CW18}
	For every weighted $n$-vertex graph $G=(V,E,w)$ and parameters $k\ge 1$ and $0<\epsilon<1$, there exists a
	$(2k-1)\cdot(1+\epsilon)$-spanner with $O(n^{1+1/k})$ edges and lightness $O(n^{1/k} (1/\eps)^{3+2/k})$. Such a spanner can be constructed in polynomial time.
\end{theorem}	

\subsection{Doubling metrics}\label{sebsec:DoublingPreliminaries}
Most statements in this section will be used in our construction and analysis of spanners for doubling metrics. 

We start with the
standard packing property in doubling metrics (see, e.g.,  \cite{GKL03}).
\begin{lemma} \label{lem:doubling_packing}
	Let $(M,\delta)$ be a metric space  with doubling dimension $\ddim$.
	If $S \subseteq M$ is a subset of points with minimum interpoint distance $r$
	that is contained in a ball of radius $R$, then
	$|S| \le \left(\frac{2R}{r}\right)^{O(\ddim)}$.
\end{lemma}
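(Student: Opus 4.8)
The plan is to iterate the defining covering property of doubling metrics. Recall that, by definition of $\ddim$, every ball in $(M,\delta)$ of radius $\rho$ can be covered by at most $2^{\ddim}$ balls of radius $\rho/2$. Starting from a ball $B$ of radius $R$ that contains $S$ and applying this property $i$ times in succession, I would obtain a cover of $B$ by at most $\left(2^{\ddim}\right)^{i} = 2^{i\cdot\ddim}$ balls, each of radius $R/2^{i}$. This is the whole geometric content of the argument; everything else is bookkeeping.

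Next I would fix the number of iterations $i$ so that the small covering balls are too small to hold two distinct points of $S$. Since any two distinct points of $S$ are at distance at least $r$, a ball of radius strictly smaller than $r/2$ has diameter strictly smaller than $r$ and hence contains at most one point of $S$. Thus it suffices to choose $i$ with $R/2^{i} < r/2$, e.g.\ $i := \lceil \log_2(2R/r)\rceil + 1$ (the additive $+1$ only to make the inequality strict rather than non-strict, and to cover the degenerate small-$R$ regime by taking $i \ge 1$). With this choice, $B$ — and therefore $S \subseteq B$ — is covered by at most $2^{i\cdot\ddim}$ balls, each containing at most one point of $S$.

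Finally, I would conclude that $|S| \le 2^{i\cdot\ddim}$, and substitute $i = \lceil \log_2(2R/r)\rceil + 1 \le \log_2(2R/r) + 2$ to get $|S| \le 2^{\ddim(\log_2(2R/r)+2)} = 2^{2\ddim}\cdot\left(\frac{2R}{r}\right)^{\ddim} \le \left(\frac{2R}{r}\right)^{O(\ddim)}$ in the regime $2R/r \ge 2$; the remaining case $R < r$ is immediate (the bound $2^{O(\ddim)}$ from a single covering step absorbs into the same expression). I do not expect a genuine obstacle here — this is a standard fact — the only point demanding any care is the off-by-one in the choice of $i$, namely ensuring the covering balls have \emph{diameter strictly below} the minimum interpoint distance and that $i$ is taken to be at least $1$, so that the additive constant in $i$ and the small-$R$ corner case are both swallowed by the $O(\ddim)$ in the exponent.
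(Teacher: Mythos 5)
Your proof is correct and is the standard iterated-covering argument: the paper itself does not prove this lemma but simply cites it as the well-known packing property of doubling metrics (see \cite{GKL03}), and your derivation---apply the doubling definition $\lceil\log_2(2R/r)\rceil+1$ times so each covering ball has diameter below the minimum interpoint distance, then count---is exactly the standard one, with the strictness and off-by-one issues handled appropriately and the constant factors absorbed into the $O(\ddim)$ exponent.
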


The following theorem states that any doubling metric admits a constant degree $(1+\eps)$-spanner. This theorem will be useful in answering Question \ref{question3}, and more specifically for achieving the degree bound required for extending the state-of-the-art Euclidean result of \cite{GLN02} to arbitrary doubling metrics (see \theoremref{finish}).

\begin{theorem}[\cite{CGMZ16,GR08}]\label{thm:doubling_degree}
	For any $n$-point metric  space $(M,\delta)$ with doubling dimension
	$\ddim$ and parameter $0<\epsilon<1/2$, there exists a $(1 + \epsilon)$-spanner with degree $(1/\epsilon)^{O(\ddim)}$.
	The runtime of this construction is $(1/\eps)^{O(\ddim)}(n \log n)$.
\end{theorem}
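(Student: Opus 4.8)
The plan is to build the spanner on top of a hierarchy of nets and to analyze its stretch through a ``go up, cross, go down'' routing argument, pushing the stretch down to $1+\eps$ and, crucially, keeping the degree independent of the aspect ratio. First I would construct a nested sequence of nets $M=N_{i_0}\supseteq N_{i_0+1}\supseteq\cdots\supseteq N_{i_{\max}}$, where each $N_i$ is a maximal $2^i$-separated (and hence $2^i$-covering) subset of $N_{i-1}$, $2^{i_0}$ is of the order of the minimum interpoint distance, and $2^{i_{\max}}$ exceeds the diameter, so $N_{i_{\max}}$ is a single point. Each point $x$ dropped between levels $i$ and $i+1$ gets as parent a closest point of $N_{i+1}$, at distance $<2^{i+1}$; composing parent pointers gives, for every $x$ and every level $i$ above the one at which $x$ is dropped, an ancestor $\mathrm{anc}_i(x)\in N_i$ with $\delta(x,\mathrm{anc}_i(x))<\sum_{j\le i}2^{j+1}=O(2^i)$. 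The spanner $H$ would consist of all parent edges together with all \emph{cross edges} $(y,z)$ with $y,z\in N_i$ and $\delta(y,z)\le c\cdot 2^i$, for a parameter $c=\Theta(1/\eps)$.

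For the stretch, fix $u,v$, write $D=\delta(u,v)$, and pick the level $i$ with $2^i=\Theta(\eps D)$. Then $a_u:=\mathrm{anc}_i(u)$ and $a_v:=\mathrm{anc}_i(v)$ are reached from $u$ and $v$ along parent edges at cost $O(2^i)=O(\eps D)$ each, and $\delta(a_u,a_v)\le\delta(a_u,u)+D+\delta(v,a_v)\le(1+O(\eps))D$, which is at most $c\cdot 2^i$ for a suitable $c=\Theta(1/\eps)$ (recall $2^i=\Theta(\eps D)$), so the cross edge $(a_u,a_v)$ lies in $H$. Concatenating, $\delta_H(u,v)\le D+O(\eps D)$, and rescaling $\eps\leftarrow\eps/\Theta(1)$ yields stretch $1+\eps$. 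By \lemmaref{lem:doubling_packing}, the cross neighbors of a net point at a single level form a $2^i$-separated set inside a ball of radius $c\cdot 2^i$, so there are only $(2c)^{O(\ddim)}=\eps^{-O(\ddim)}$ of them, and likewise only $2^{O(\ddim)}$ children per level; hence if every point were alive at only $O(1)$ levels, the degree bound would follow immediately.

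The real obstacle is exactly that a net point can be alive — and be a genuine routing hub — over $\Theta(\log(\text{aspect ratio}))$ levels (think of a point with others at distances $1,2,4,\ldots$), and the routing above then forces a distinct cross edge at each such level, so the naive spanner has unbounded degree; this is precisely the difficulty that \cite{CGMZ05,GR08} overcome. Their idea, which I would follow, is to not let a long-lived net point carry a full star of cross edges at every level, but to replace these stars by a local bounded-degree structure — morally a ``bucket brigade'' that forwards connections from one scale to the next rather than a single hub linking to all scales — using that the relevant net points at each scale lie in a ball of radius $O(2^i/\eps)$ and are $2^i$-separated, hence number only $\eps^{-O(\ddim)}$, so each local structure adds only $O(\eps)$ relative distortion and, by a careful charging, leaves every vertex with degree $\eps^{-O(\ddim)}$. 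I expect this forwarding/charging step, together with redoing the stretch bookkeeping once the up/cross/down route also uses a few same-scale hops, to be the main technical burden. The runtime $\eps^{-O(\ddim)}(n\log n)$ then follows from computing the nets in that time (e.g.\ via the net-construction routine of \cite{GR08}) and observing that the $\eps^{-O(\ddim)}$ edges incident to each point are found by point-location-type queries in the net hierarchy costing $O(\log n)$ each.
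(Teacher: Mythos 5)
First, note that the paper does not prove this statement at all: it is imported verbatim from \cite{CGMZ05,GR08} as a black box (it is only used later as the bounded-degree spanner $G'$ inside Algorithm \texttt{Approximate-Greedy}), so there is no internal proof to compare yours against; what matters is whether your sketch would stand on its own as a proof of the cited result.

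It does not, and the gap is exactly where you place the ``main technical burden.'' Your net hierarchy, cross edges at scale $c\cdot 2^i$ with $c=\Theta(1/\eps)$, and the up--cross--down stretch analysis are correct, and \lemmaref{lem:doubling_packing} does give $\eps^{-O(\ddim)}$ incident cross edges \emph{per level}; but as you yourself observe, a net point can survive $\Theta(\log(\text{aspect ratio}))$ levels and serve as a routing hub at each of them, so this construction only yields degree $\eps^{-O(\ddim)}\cdot\log\Delta$, not $\eps^{-O(\ddim)}$. Eliminating that factor is the entire content of the theorem (degree independent of both $n$ and the aspect ratio, with no Steiner points), and your treatment of it is a statement of intent rather than an argument: the ``bucket brigade'' replacement of a long-lived hub's stars is never defined (which same-scale edges are added, in what orientation, with what fan-in), its single-sink stretch property is not proved (you need that routing toward the hub through the forwarding structure incurs only a $1+O(\eps)$ factor, uniformly over all scales at which the hub is alive), the charging argument that every vertex ends with only $\eps^{-O(\ddim)}$ incident edges is not carried out, and the global stretch analysis is not redone for routes that now use these same-scale detours instead of direct cross edges to the hub. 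In \cite{CGMZ05} this is precisely the single-sink-spanner construction and its accompanying degree/stretch accounting, and it is the hard part of the proof; asserting that ``a careful charging'' will work is not a substitute for it. The runtime claim has a similar (milder) gap: building the net hierarchy and locating the $\eps^{-O(\ddim)}$ edges per point within $\eps^{-O(\ddim)}(n\log n)$ total time requires the navigating-net/net-tree machinery you allude to, not just the assertion that each query costs $O(\log n)$. So the proposal is a faithful outline of the known approach, but the theorem's distinguishing claim is left unproved.
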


The result of Smid \cite{Smid09} is summarized in the following theorem. We provide this theorem only for the sake of comparison with our new result (see \corollaryref{cor:Greedy_G}),
which improves the logarithmic lightness bound provided by this theorem to constant.
\begin{theorem}[\cite{Smid09}]\label{thm:smid09}
	For any $n$-point metric space $(M,\delta)$ with doubling dimension $\ddim$ and any parameter $0<\epsilon<\frac{1}{2}$,  the greedy $(1+\epsilon)$-spanner has
	$(1/\epsilon)^{O(\ddim)}n$  edges and lightness $(1/\epsilon)^{O(\ddim)}\log n$.
\end{theorem}

The result of Gottlieb \cite{Got15} is summarized in the following theorem. We will use this theorem for answering Questions \ref{question2} and \ref{question3}.
\begin{theorem}[\cite{Got15}]\label{thm:Got15}
	For any $n$-point metric space $(M,\delta)$ with doubling dimension $\ddim$ and parameter $0<\epsilon<1/2$,
	there exists a $(1+\epsilon)$-spanner  with lightness $(\ddim/\epsilon)^{O(\ddim)}$.
	The runtime of this construction is $(\ddim/\epsilon)^{O(\ddim)} (n\log^{2}n)$.
\end{theorem}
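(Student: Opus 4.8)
The plan is to construct a hierarchical, net-based $(1+\eps)$-spanner and then bound its weight by a cross-scale charging argument against a minimum spanning tree; the stretch bound is routine, so the weight bound is the real content.

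\emph{Construction.} Rescale so the minimum interpoint distance is $1$, and let $\Delta=\diam(M)$. For $i=0,1,\dots,\lceil\log\Delta\rceil$ I would fix a nested sequence of nets $M=N_0\supseteq N_1\supseteq\cdots$, where $N_i$ is a $2^i$-net (pairwise distances $\ge 2^i$, covering radius $\le 2^i$), built greedily; declaring the parent of $x\in N_i$ to be a nearest point of $N_{i+1}$ turns this into a net tree. Set $\gamma=\Theta(1/\eps)$ and let the spanner $H$ contain, for every level $i$, all ``cross edges'' $(u,v)$ with $u,v\in N_i$ and $\delta(u,v)\le\gamma\cdot 2^i$. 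By the packing Lemma~\lemmaref{lem:doubling_packing}, each point of $N_i$ acquires at most $\gamma^{O(\ddim)}=(1/\eps)^{O(\ddim)}$ cross edges at its own level, a fact that will be used in the weight analysis.

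\emph{Stretch.} For $p,q\in M$, pick the level $j$ with $2^j\le\eps\cdot\delta(p,q)<2^{j+1}$ and let $p_j,q_j\in N_j$ cover $p,q$ within distance $2^j$. Then $\delta(p_j,q_j)\le\delta(p,q)+2\cdot 2^j\le\gamma\cdot 2^j$ for a suitable constant in $\gamma$, so the cross edge $(p_j,q_j)$ lies in $H$; recursing on the pairs $(p,p_j)$ and $(q,q_j)$, which have distance $\le 2^j$ and hence descend to strictly smaller scales, builds a path in $H$ between $p$ and $q$ whose extra length telescopes geometrically to $O(2^j)=O(\eps\cdot\delta(p,q))$, while the lone edge $(p_j,q_j)$ costs $\delta(p,q)+O(2^j)$. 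Hence $\delta_H(p,q)\le(1+O(\eps))\,\delta(p,q)$, and rescaling $\eps$ yields stretch $1+\eps$.

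\emph{Weight (the main obstacle).} The naive estimate bounds the total length of level-$i$ cross edges by $(1/\eps)^{O(\ddim)}\cdot|N_i|\cdot(\gamma\,2^i)=(1/\eps)^{O(\ddim)}\,|N_i|\,2^i$, which is $(1/\eps)^{O(\ddim)}\,w(MST)$ since $|N_i|\,2^i=O(w(MST))$ (any spanning tree of $N_i$ has weight $\ge(|N_i|-1)\,2^i$, while $N_i$ admits one of weight $\le 2\,w(MST)$ by double-and-shortcut); summing over the $\Theta(\log\Delta)$ levels only reproduces the $O(\log n)$-type bound of \cite{Smid09}. To remove the $\log\Delta$ factor, the plan is a cross-scale charging scheme: route each level-$i$ cross edge along an MST path and charge it to a sub-segment of that path, of total length $\Theta$ of the edge's own length, lying near one endpoint, then argue that every MST edge receives only $(\ddim/\eps)^{O(\ddim)}$ units of charge over all levels together — at a fixed level because packing caps how many level-$i$ net points have a charge region reaching a given MST edge, and across levels because the relevant length scales grow geometrically so the series converges. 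The delicate point, where I expect most of the work to go, is making the charge regions genuinely (almost) disjoint both within a level and between levels; the cleanest route is probably to process levels from the top down, greedily ``consuming'' MST segments and using a doubling/packing argument to show that a consumed segment near an MST edge blocks only a bounded number of later, coarser charges. Carrying this through gives $w(H)\le(\ddim/\eps)^{O(\ddim)}\,w(MST)$, i.e. lightness $(\ddim/\eps)^{O(\ddim)}$.

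\emph{Runtime.} The net hierarchy (equivalently a net-tree / navigating-nets data structure) can be built in $(\ddim/\eps)^{O(\ddim)}\,n\log^2 n$ time, after which the cross edges are enumerated in time proportional to their number, matching the stated bound.
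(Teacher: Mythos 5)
This theorem is not proved in the paper at all: it is imported verbatim from \cite{Got15} (the paper explicitly treats Gottlieb's construction as a black box), so the only meaningful comparison is with Gottlieb's own proof, which your sketch does not reproduce.

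The gap in your proposal is not the deferred ``delicate point''---it is that the weight bound you set out to prove is \emph{false} for the construction you fixed. Taking \emph{all} cross edges $(u,v)$ with $u,v\in N_i$ and $\delta(u,v)\le\gamma\cdot 2^i$, $\gamma=\Theta(1/\eps)$, at every level yields a spanner whose lightness is genuinely $\Theta(\log n)$, already in doubling dimension $1$: for $n$ uniformly spaced points on a line, level $i$ has $\approx n/2^i$ net points, each incident to $\approx\gamma$ cross edges of average length $\Theta(\gamma 2^i)$, so the level-$i$ edges weigh $\Theta(n\gamma^2)$ independently of $i$, and summing over the $\Theta(\log n)$ levels gives total weight $\Theta(n\gamma^2\log n)$ against an MST of weight $\Theta(n)$. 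Hence no charging scheme, however cleverly the charge regions are made disjoint, can certify $(\ddim/\eps)^{O(\ddim)}$ lightness for this edge set; you would be charging away weight that is actually there. This is exactly why the sub-logarithmic lightness question was open: the standard net-tree/cross-edge spanner (and, by \cite{Smid09}, even the greedy spanner's known analysis) stalls at $O(\log n)$. Gottlieb's construction is different in an essential way---long edges are added only when the hierarchy built so far fails to provide a $(1+\eps)$-approximate path (together with further machinery to control how such ``heavy'' edges accumulate across scales), and the bulk of his paper is precisely the weight analysis of that pruned construction. Your stretch and per-level sparsity arguments are fine and standard, and the runtime claim is plausible for a navigating-nets implementation, but the central lightness claim is unsupported and, for the stated construction, unprovable; the proposal therefore does not constitute a proof of the theorem.
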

{\bf Remark.} Recently, Borradaile, Le and Wulff{-}Nilsen \cite{BLW19} showed that the greedy $(1+\eps)$-spanner has lightness  $(1/\epsilon)^{O(\ddim)}$ 
in doubling metrics, improving over the lightness bound provided by Theorem \ref{thm:Got15}. 

\subsection{The Greedy Spanner and its Basic Properties}
\begin{algorithm}
	\caption{$\texttt{Greedy}(G=(V,E,w),t)$}\label{fig:greedy_alg}
	\begin{algorithmic}[1]
		\STATE $H=(V,\emptyset,w)$.
		\FOR {each edge $(u,v)\in E$, in non-decreasing order of weight,}
		\IF {$\delta_H(u,v)>t\cdot w(u,v)$}
		\STATE Add the edge $(u,v)$ to $E(H)$.
		\ENDIF
		\ENDFOR
	\end{algorithmic}
\end{algorithm}
The greedy spanner algorithm is presented in \algref{fig:greedy_alg}.
Let $H=\left(V,E_{H},w\right)$ be the output of an arbitrary execution of the greedy algorithm with stretch parameter $t$.
It is immediate that $H$ has stretch at most $t$.
If the edge weights in the graph are distinct, then $H$ is uniquely defined, but this does not hold in general;
nevertheless, by letting $H$ designate an arbitrary such spanner, we may henceforth refer to it as the \emph{greedy $t$-spanner}.
The following observation is immediate (see, .e.g., \cite{ENS14,CW18}).
\begin{observation}
	\label{fct:greedy contains MST}
	$H$ contains all   edges of some MST of $G$, denoted $Z$. (Hence $Z$ is also an MST of $H$.)
\end{observation}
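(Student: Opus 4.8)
The plan is to compare the given greedy execution with a run of Kruskal's algorithm on $G$ that uses \emph{the same} ordering of $E$ (non-decreasing by weight, with identical tie-breaking). Let $Z$ be the spanning tree output by this Kruskal run; since $G$ is connected, $Z$ is a spanning tree of $G$, and by the standard correctness of Kruskal's algorithm it is an MST of $G$. It therefore suffices to prove that $Z \subseteq E_H$: once this is established, the ``hence'' part is immediate, since an MST of $G$ that is contained in the subgraph $H$ must also be an MST of $H$ — any spanning tree of $H$ is a spanning tree of $G$ and hence has weight at least $w(Z)$.

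To show $Z \subseteq E_H$ I would run the two processes in lockstep along the shared edge ordering and maintain the invariant: after processing the first $i$ edges, the current Kruskal forest $Z_i$ satisfies $Z_i \subseteq H_i$ \emph{and} $Z_i$ and $H_i$ induce the same partition of $V$ into connected components, where $H_i$ denotes the partial greedy spanner. The base case $i=0$ is trivial. For the inductive step, consider the $(i{+}1)$-st edge $e=(u,v)$. If Kruskal adds $e$, then $u$ and $v$ lie in distinct components of $Z_i$, hence (by the invariant) in distinct components of $H_i$, so $\delta_{H_i}(u,v)=\infty > t\cdot w(u,v)$ and the greedy algorithm also adds $e$; both forests merge the same two components, so the invariant is preserved. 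If Kruskal does not add $e$, then $u$ and $v$ already lie in the same component of $Z_i$, hence of $H_i$; whether or not greedy adds $e$ — it may add it as a ``shortcut'' edge when $\delta_{H_i}(u,v) > t\cdot w(u,v)$ despite $u$ and $v$ being connected — the component partitions of both $Z$ and $H$ are unchanged and $Z_{i+1}=Z_i \subseteq H_i \subseteq H_{i+1}$, so the invariant is again preserved. Taking $i=|E|$ yields $Z \subseteq H$.

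The one nontrivial point — the only place where the argument is not a one-liner — is the ``same connected components'' clause of the invariant: because the greedy spanner generally contains strictly more edges than $Z$, one must rule out that these extra shortcut edges ever let $H_i$ connect two vertices that $Z_i$ keeps apart. What saves the day is that a shortcut edge is, by definition, added only between two vertices already connected in $H_i$, so it cannot coarsen the partition; together with the fact that $\delta_{H_i}(u,v)=\infty$ forces greedy to imitate Kruskal on exactly the edges Kruskal acts on, the two partitions stay synchronized throughout. Fixing one common edge ordering for both processes disposes of any ambiguity from equal-weight edges.
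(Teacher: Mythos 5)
Your proof is correct, and it is precisely the standard Kruskal-synchronization argument that the paper treats as immediate (it cites \cite{ENS14,CW16} and gives no proof of its own): the greedy algorithm adds every edge whose endpoints lie in different components of the current spanner, since the current distance is infinite, so it simulates Kruskal's algorithm on the shared edge ordering, and your invariant makes this explicit and handles the shortcut edges and ties carefully. Nothing is missing; the ``hence'' part is also justified correctly.
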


\section{The Basic Optimality Proof}\label{sec:Greedy_optimal}
In this section we show that the greedy spanner is existentially optimal, with respect to both the size and the lightness, for any graph family that is closed under edge removal.

We start by making the basic observation that the only $t$-spanner of the greedy $t$-spanner is itself.
\begin{lemma}
	\label{obs:spanner_of_greedy}
	Let $G=(V,E,w)$ be any weighted graph, let $t \ge 1$ be any stretch parameter, and let $H$ be the greedy $t$-spanner of $G$.
	If $H'$ is a $t$-spanner for $H$, then $H' = H$.
\end{lemma}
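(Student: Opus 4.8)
The plan is to reduce the lemma to the following irredundancy property of the greedy spanner: for every edge $e=(u,v)\in E_H$ one has $\delta_{H\setminus\{e\}}(u,v) > t\cdot w(u,v)$. Granting this, the lemma is immediate: if $H'$ is a $t$-spanner of $H$ with $E_{H'}\ne E_H$, pick $e=(u,v)\in E_H\setminus E_{H'}$; then $H'$ is a subgraph of $H\setminus\{e\}$, so $\delta_{H'}(u,v)\ge\delta_{H\setminus\{e\}}(u,v) > t\cdot w(u,v)\ge t\cdot\delta_H(u,v)$, contradicting the defining property of a $t$-spanner of $H$. Hence $E_{H'}=E_H$, i.e. $H'=H$.

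To prove the irredundancy property, fix the total order in which the greedy execution examines the edges of $G$ — a refinement of the non-decreasing-weight order — and for an edge $e$ write $\rank(e)$ for its position in this order and $H_{\prec e}$ for the set of edges already added to the spanner at the moment $e$ is examined. Two facts are used repeatedly: (i) if $e=(u,v)\in E_H$ then greedy added $e$, so $\delta_{H_{\prec e}}(u,v) > t\cdot w(e)$; and (ii) $\rank(e')<\rank(e)$ implies both $w(e')\le w(e)$ and, when in addition $e'\in E_H$, that $e'\in H_{\prec e}$. Now suppose toward a contradiction that some $e=(u,v)\in E_H$ satisfies $\delta_{H\setminus\{e\}}(u,v)\le t\cdot w(e)$, and let $P$ be a shortest $u$–$v$ path in $H\setminus\{e\}$, so $w(P)\le t\cdot w(e)$; since the only $u$–$v$ edge of $H$ has been removed, $P$ consists of at least two edges, all belonging to $E_H\setminus\{e\}$. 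Let $f$ be the edge of $P$ of maximum rank; as $f\ne e$ we have $\rank(f)\ne\rank(e)$.

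If $\rank(f)<\rank(e)$, then every edge of $P$ has rank smaller than $\rank(e)$ and lies in $E_H$, so by (ii) all of them lie in $H_{\prec e}$; hence $\delta_{H_{\prec e}}(u,v)\le w(P)\le t\cdot w(e)$, contradicting (i). Otherwise $\rank(f)>\rank(e)$; label the endpoints of $f$ as $x,y$ so that $P$ traverses $f$ from $x$ to $y$, and write $P=P_1\cdot f\cdot P_2$, where $P_1$ is the $u$–$x$ prefix and $P_2$ the $y$–$v$ suffix (either may be empty). Since $f$ is the unique maximum-rank edge of $P$, every edge of $P_1$ and $P_2$ has rank smaller than $\rank(f)$ and lies in $E_H$, and so does $e$; thus by (ii) the $x$–$y$ walk $W$ obtained by concatenating $P_1$ reversed, the edge $e$, and $P_2$ reversed lies entirely in $H_{\prec f}$. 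By (ii) again $w(e)\le w(f)$, so $w(W)=w(P_1)+w(e)+w(P_2)\le w(P_1)+w(f)+w(P_2)=w(P)\le t\cdot w(e)\le t\cdot w(f)$, whence $\delta_{H_{\prec f}}(x,y)\le t\cdot w(f)$, contradicting (i) applied to $f$. In either case we reach a contradiction, which establishes the irredundancy property and hence the lemma.

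I expect the only real subtlety to be the bookkeeping with tied edge weights: one must commit to the concrete examination order rather than the weight order alone, so that $\rank$ is a genuine total order and statement (ii) holds; once that is fixed, both cases are a direct comparison of a path's weight against the greedy threshold. A secondary point worth double-checking is that the constructed walk $W$ really avoids $f$ and really joins $x$ to $y$ — both hold because $f$ is the unique maximum-rank edge of the simple path $P$, and because $e$ joins $u$ to $v$ (so reversing $P_1$ goes $x\to u$, traversing $e$ goes $u\to v$, and reversing $P_2$ goes $v\to y$).
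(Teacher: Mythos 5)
Your proof is correct and follows essentially the same argument as the paper: there one takes, for an edge $e \in H \setminus H'$, a shortest replacement path $P$ in $H'$, considers the last-examined edge $e'$ among the edges of $P$ and $e$, and observes that the path $(P \cup e) \setminus e'$ already lies in the greedy spanner when $e'$ is examined and has weight at most $t \cdot w(e')$, so greedy would have rejected $e'$. Your intermediate irredundancy lemma and the explicit split into the cases $\rank(f) < \rank(e)$ and $\rank(f) > \rank(e)$ are just a repackaging of that single unified step.
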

\begin{proof}
	Assume for contradiction that $H'$ is a $t$-spanner for $H$ yet there is an edge $e \in H \setminus H'$.
	Let $P$ be a shortest path in $H'$ between the endpoints of $e$. As $H'$ is a $t$-spanner of $H$,
	it holds that $w(P)\le t\cdot w(e)$.
	Consider the last edge examined by the greedy algorithm among the edges of $P$ and $e$, denoted $e'$.
	By the description of the greedy algorithm, we have $w(e) \le w(e')$.
	Consequently, by the time the greedy algorithm examines edge $e'$, all the edges of the path $(P \cup e) \setminus e'$ must have already been added to the greedy spanner.
	(See \figureref{fig:lem3} for an illustration.)
	This path connects the endpoints of $e'$, and its weight is given by $$w(P) - w(e') + w(e) ~\le~ w(P) ~\le~ t \cdot w(e) ~\le~ t \cdot w(e').$$
	Hence the greedy algorithm will not add edge $e'$ to $H$, a contradiction.
	\begin{figure}
		\begin{center}
			\includegraphics[width=0.41\textwidth]{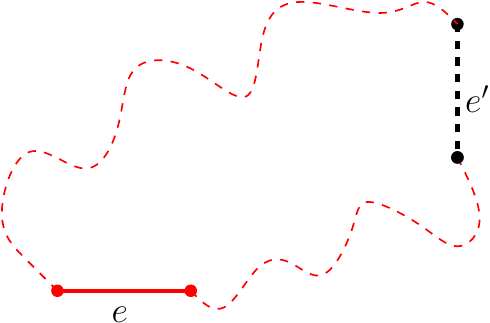}
			\caption{\small The path $P$ in $H'$ between the endpoints of edge $e$ is depicted by a dashed line.
				The path $P\cup e \setminus e'$ between the endpoints of edge $e'$, all edges of which have been added to $H$ by the time the greedy algorithm examines edge $e'$,
				is colored red.}
			\label{fig:lem3}
		\end{center}
	\end{figure}
\end{proof}

Equipped with \lemmaref{obs:spanner_of_greedy}, we now turn to the basic optimality proof.
\begin{theorem}[Greedy is existentially optimal]\label{thm:main}
	Let $\mathcal{G}$ be any family of $n$-vertex graphs that is closed under edge
	removal, and let $t = t(n) \ge 1$ be any stretch parameter.
	For every graph $G\in\mathcal{G}$,	the greedy $t$-spanner $H$ of $G$ has at most $OPT^{sparse}_t(\mathcal{G})$ edges and lightness at most $OPT^{light}_t(\mathcal{G})$.
	In other words, the greedy $t$-spanner is existentially optimal for $\mathcal{G}$ with respect to both the size and the lightness.
\end{theorem}
\begin{proof}
	Consider an arbitrary graph $G$ in $\mathcal{G}$, and let $H$ be the greedy $t$-spanner of $G$.
	Since $\mathcal{G}$ is closed under edge removal and $H$ is a subgraph of $G$,  $H$ belongs to $\mathcal{G}$.
	Hence, there exist $t$-spanners $\mathcal{H}^{sparse}$ and $\mathcal{H}^{light}$ of $H$ with at most $OPT^{sparse}_t(\mathcal{G})$  edges and lightness at most $OPT^{light}_t(\mathcal{G})$, respectively.
	\lemmaref{obs:spanner_of_greedy} implies that $\mathcal{H}^{sparse}=\mathcal{H}^{light}=H$, from which the size bound on $H$ immediately follows. 
	The lightness bound is slightly trickier, as the spanner $\mathcal{H}^{light}$ is computed on top of the greedy spanner $H$ rather than the original graph $G$.
	Nevertheless, \observationref{fct:greedy contains MST} implies that $G$ and $H$ have the same MST $Z$.
	Since the lightness of $\mathcal{H}^{light}$ is at most $OPT^{light}_t(\mathcal{G})$ and $Z$ is an MST for $H$, it follows that
	$$OPT^{light}_t(\mathcal{G}) ~\ge~ \Psi(\mathcal{H}^{light}) ~=~ \frac{w(\mathcal{H}^{light})}{w(MST(H))} ~=~ \frac{w(\mathcal{H}^{light})}{w(Z)}~.$$
	Using the fact that $\mathcal{H}^{light}= H$, we conclude that the lightness of $H$ satisfies
	$$\Psi(H) ~=~ \frac{w(H)}{w(MST(G))}  ~=~ \frac{w(H)}{w(Z)} ~=~ \frac{w(\mathcal{H}^{light})}{w(Z)} ~\le~
	OPT^{light}_t(\mathcal{G}).$$
\end{proof}

As the family of weighted graphs is closed under edge removal, we can apply \theoremref{thm:main} on it.
Hence the greedy spanner for general graphs has size and lightness at least as good as in \theoremref{thm:CW18}.
\begin{corollary}\label{cor:Greedy_CW}
	For every weighted graph $G=(V,E,w)$ on $n$ vertices and $m$ edges and parameters $k\ge 1$ and $0<\epsilon<1$, the greedy
	$(2k-1)\cdot(1+\epsilon)$-spanner has $O(n^{1+1/k})$ edges and lightness $O(n^{1/k} (1/\eps)^{3+2/k})$.
	(A naive implementation of the greedy algorithm requires $O(m n^{1+1/k})$ time.)
\end{corollary}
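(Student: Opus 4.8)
The plan is to apply \theoremref{thm:main} directly to the family of all weighted $n$-vertex graphs, using \theoremref{thm:CW16} as the source of the existential bounds. First I would observe that the family $\mathcal{G}$ of all weighted graphs on $n$ vertices is trivially closed under edge removal: deleting an edge from a weighted graph yields another weighted graph on the same vertex set. Hence the hypothesis of \theoremref{thm:main} is satisfied for this family with stretch parameter $t = (2k-1)\cdot(1+\epsilon)$.

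Next I would invoke \theoremref{thm:CW16}, which guarantees that every weighted $n$-vertex graph admits a $(2k-1)\cdot(1+\epsilon)$-spanner with $O(n^{1+1/k})$ edges and lightness $O(n^{1/k}(1/\eps)^{3+2/k})$. Setting $m(n,t) = O(n^{1+1/k})$ and $l(n,t) = O(n^{1/k}(1/\eps)^{3+2/k})$ in the statement of \theoremref{thm:main}, we conclude that for every weighted $n$-vertex graph $G$, the greedy $(2k-1)\cdot(1+\epsilon)$-spanner $H$ of $G$ has at most $m(n,t) = O(n^{1+1/k})$ edges and lightness at most $l(n,t) = O(n^{1/k}(1/\eps)^{3+2/k})$. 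This gives the size and lightness claims of the corollary.

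Finally I would address the parenthetical runtime remark. The naive implementation of \algref{fig:greedy_alg} sorts the $m$ edges and then, for each edge $(u,v)$, computes $\delta_H(u,v)$ in the current spanner $H$ and compares it to $t\cdot w(u,v)$. Since $H$ always has $O(n^{1+1/k})$ edges by the size bound just established, each shortest-path computation (e.g., via Dijkstra) takes $O(n^{1+1/k})$ time up to logarithmic factors, and there are $m$ such computations, for a total of $O(m\, n^{1+1/k})$ time; the initial sort costs only $O(m\log m)$, which is subsumed.

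The only mild subtlety — and it is already handled inside the proof of \theoremref{thm:main} — is that when one applies the existential bound it is applied to $H$ itself (which lies in $\mathcal{G}$), not to $G$, so the resulting spanner $\mathcal{H}$ is a spanner of $H$; \lemmaref{obs:spanner_of_greedy} forces $\mathcal{H} = H$, and \observationref{fct:greedy contains MST} ensures $G$ and $H$ share an MST so the lightness normalization is consistent. Since all of this is packaged in \theoremref{thm:main}, the corollary follows with essentially no extra work, and there is no real obstacle; the main point is simply to identify the correct family and plug in the parameters.
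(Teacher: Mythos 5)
Your proof is correct and matches the paper's own derivation: the corollary is obtained exactly by noting that the family of all weighted $n$-vertex graphs is closed under edge removal and plugging the bounds of \theoremref{thm:CW16} into \theoremref{thm:main}. Your added justification of the parenthetical runtime remark is a harmless elaboration of what the paper states without proof.
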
	

In \cite{BFN19} it was proved that for any parameter $0<\delta<1$ and any stretch parameter $t = t(n)$,
if every  $n$-vertex weighted graph  admits a $t$-spanner with at most $m(n,t)$ edges and lightness  at most $l(n,t)$, then for every such graph
there also exists a $t/\delta$-spanner with at most $m(n,t)$ edges and lightness at most  $1+\delta\cdot l(n,t)$.
Applying \theoremref{thm:main} again, we derive the following result.
\begin{corollary}\label{cor:light_Greedy}
	For every weighted $n$-vertex graph $G = (V,E,w)$ and parameter $0<\delta<1$,
	the greedy $O(\log n/\delta)$-spanner has $O(n)$ edges and lightness at most $1+\delta$.
\end{corollary}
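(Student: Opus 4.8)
The plan is to chain together three facts already available to us: the constant-lightness, logarithmic-stretch spanner of \theoremref{thm:CW16}; the stretch-versus-lightness trade-off of \cite{BFN16}; and the existential optimality of the greedy spanner, \theoremref{thm:main}. These are applied in this order, so the whole argument first produces an \emph{existence} statement about some $O(\log n/\delta)$-spanner and only then transfers it to the greedy spanner; consequently the first thing to commit to is the precise stretch function $t=t(n)$ to which \theoremref{thm:main} will eventually be applied, and everything else is bookkeeping of constants.

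First I would instantiate \theoremref{thm:CW16} with $k=\log n$ (rounded up if necessary) and $\epsilon=1/2$. Because $n^{1/\log n}\le 2$, this gives, for every weighted $n$-vertex graph $G$, a spanner of stretch $t_0=(2\log n-1)\cdot(3/2)=O(\log n)$ with $O(n^{1+1/\log n})=O(n)$ edges and lightness at most an absolute constant $l_0$, which we may assume satisfies $l_0\ge 1$. Here it is essential that \cite{CW16} yields a constant lightness at logarithmic stretch: the classical bounds of \cite{ADDJS93,CDNS92,ENS14} are all superconstant at this stretch, and feeding any of them into the next step would not produce a final lightness of the form $1+\delta$.

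Next I would apply the reduction of \cite{BFN16} to this family-wide guarantee, with its slack parameter set to $\delta_0:=\delta/l_0$; this is legitimate since $0<\delta<1\le l_0$ forces $0<\delta_0<1$. The reduction then yields, for every weighted $n$-vertex graph, a $(t_0/\delta_0)$-spanner with $O(n)$ edges and lightness at most $1+\delta_0\cdot l_0=1+\delta$. As $l_0=O(1)$ and the hidden constant in $t_0$ is absolute, $t_0/\delta_0=l_0\cdot t_0/\delta=O(\log n/\delta)$; letting $c$ absorb all these constants and setting $t(n):=c\log n/\delta$, we have shown that \emph{every} weighted $n$-vertex graph admits a $t(n)$-spanner with $O(n)$ edges and lightness at most $1+\delta$.

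Finally, the family $\mathcal{G}$ of all weighted $n$-vertex graphs is trivially closed under edge removal, so \theoremref{thm:main} applies to $\mathcal{G}$ with stretch parameter $t(n)=c\log n/\delta$, with $m(n,t)=O(n)$ and $l(n,t)=1+\delta$: it tells us that the greedy $t(n)$-spanner of every $G\in\mathcal{G}$ also has $O(n)$ edges and lightness at most $1+\delta$, which is exactly the claim. There is no genuine obstacle here; the one point demanding attention is that \theoremref{thm:main} uses a single stretch parameter in both its hypothesis and its conclusion, so one must commit to the precise function $t(n)=c\log n/\delta$ before carrying out the earlier steps and verify the existence statement at exactly that stretch — which is what the instantiation $k=\log n$, $\epsilon=1/2$ accomplishes.
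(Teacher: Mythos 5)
Your proof is correct and follows essentially the same route the paper intends: instantiate \theoremref{thm:CW16} at $k=\Theta(\log n)$ to get an $O(\log n)$-stretch spanner with $O(n)$ edges and constant lightness, push it through the \cite{BFN16} trade-off (with the slack rescaled by the constant lightness so the final bound is exactly $1+\delta$), and then transfer the resulting existence statement to the greedy spanner via \theoremref{thm:main}, using that the family of all weighted $n$-vertex graphs is closed under edge removal. The only difference is that you spell out the parameter bookkeeping the paper leaves implicit.
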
	

As mentioned in the introduction, a plethora of graph families that are closed under edge removal were studied extensively in the spanner literature.
This includes the families of planar graphs, bounded genus graphs, bounded treewidth graphs, graphs excluding fixed minors, and more.
For all these graph families, \theoremref{thm:main} shows that the greedy spanner is existentially optimal.

\section{The Optimality Argument in Doubling Metrics}\label{sec:doubling}
The basic optimality argument of \sectionref{sec:Greedy_optimal} applies to graph families that are closed under edge removal.
Note that metric spaces do not fall into this category.
Nevertheless, for metric spaces, the basic optimality argument suffices: On the one hand, the upper bound for general weighted graphs applies to any metric space,
and on the other hand, the lower bound due to high girth graphs naturally applies to the induced metric spaces (see, e.g., \cite{ADDJS93,RR98}).

In this section we study the optimality of the greedy spanner for \emph{doubling metrics}.
For such metric spaces, one would like to obtain spanners with stretch $1+\eps$, where $\eps$ is arbitrarily close to 0.
We will show that the greedy $(1+\eps)$-spanner is existentially near-optimal in doubling metrics, with respect to both the size and the lightness.
The next observation and subsequent lemma will be used for proving the lightness optimality.
\begin{observation} \label{mst-metric}
	Consider the metric space $M_G$ induced by an arbitrary weighted graph $G = (V,E,w)$.
	Then any MST of $M_G$ is a spanning tree of $G$. (Hence there is a common MST for $G$ and $M_G$, denoted $Z$.)
\end{observation}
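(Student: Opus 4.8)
The plan is to establish the structural claim first---that an MST of $M_G$ must consist entirely of edges of $G$, each realizing its graph distance---and then deduce by a short weight comparison that such an MST is in fact an MST of $G$, giving the common tree $Z$. The engine is a cut-property (edge-exchange) argument carried out inside the complete weighted graph $M_G$, in which the pair $(u,v)$ has weight $\delta_G(u,v)$.

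Let $T$ be an arbitrary MST of $M_G$. I would show that for every pair $(u,v)\in T$, the single edge $(u,v)$ belongs to $E$, equals the unique shortest $u$--$v$ path of $G$, and hence satisfies $w(u,v)=\delta_G(u,v)$. Suppose this fails for some $(u,v)\in T$; then, since $G$ is connected, there is a shortest $u$--$v$ path $P$ in $G$ using at least two edges, with $w(P)=\delta_G(u,v)$. Deleting $(u,v)$ from the tree $T$ splits $V$ into exactly two components, one holding $u$ and the other holding $v$; since $P$ runs from $u$ to $v$, it contains an edge $f=(x,y)\in E$ crossing this cut. Because $P$ has at least two edges of positive weight, $w(f)<w(P)=\delta_G(u,v)$, and since $f$ is itself an edge of $G$, $\delta_G(x,y)\le w(f)$. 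Therefore the $M_G$-weight of $f$ is $\delta_G(x,y)\le w(f)<\delta_G(u,v)$, the $M_G$-weight of $(u,v)$. Swapping $f$ for $(u,v)$ in $T$ reconnects the two components and strictly lowers the total weight in $M_G$, contradicting the minimality of $T$.

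From this, $T\subseteq E$, so $T$ is a spanning tree of $G$; moreover its weight in $G$ is $\sum_{e\in T}w(e)=\sum_{e\in T}\delta_G(e)$, which is precisely its weight in $M_G$. To upgrade this to ``$T$ is an MST of $G$'': any spanning tree $T'$ of $G$ is, because $E\subseteq{V\choose 2}$, also a spanning tree of $M_G$, and its $M_G$-weight $\sum_{e\in T'}\delta_G(e)$ is at most its $G$-weight $\sum_{e\in T'}w(e)$; hence the $G$-weight of $T'$ is at least the $M_G$-weight of $T'$, which is at least the $M_G$-weight of $T$ (as $T$ minimizes it), which equals the $G$-weight of $T$. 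So $T$ is an MST of $G$, and taking $Z:=T$ produces a common MST of $G$ and $M_G$. There is no serious obstacle here; the point that needs care is the ``tie'' case, in which $(u,v)\in E$ with $w(u,v)=\delta_G(u,v)$ yet some other shortest $u$--$v$ path has length at least two. Phrasing the target property as ``$(u,v)$ is the \emph{unique} shortest path'' is exactly what makes the exchange argument cover this case, and one should also invoke the standard fact that deleting a single edge of a tree leaves exactly two components with that edge's endpoints separated.
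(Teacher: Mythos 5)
Your proof is correct, and while it is in the same exchange-by-contradiction spirit as the paper's, its mechanics and its scope differ enough to be worth comparing. The paper only considers an MST edge $e$ of $M_G$ lying outside $E$: it replaces $e$ by an entire shortest path of $G$, notes that the resulting spanning (multi)subgraph of $M_G$ has no larger weight but too many edges, and breaks a cycle (using positivity of the weights) to obtain a strictly lighter spanning tree, a contradiction; the parenthetical ``common MST'' claim is then left implicit. You instead run the classical cut/exchange argument: delete the offending tree edge, pick the edge of the shortest path crossing the resulting cut, and swap it in, where the strict improvement comes from the path having at least two positive-weight edges. The payoff of your stronger invariant --- each tree edge lies in $E$ and realizes $w(u,v)=\delta_G(u,v)$ (indeed is the unique shortest path, though uniqueness is more than you actually need) --- is that the ``hence'' clause gets a genuine proof: your final weight comparison shows an MST of $M_G$ is also an MST of $G$, whereas the paper's argument as written only excludes tree edges outside $E$ and never rules out edges $(u,v)\in E$ with $w(u,v)>\delta_G(u,v)$, so your route fully justifies the common-MST conclusion. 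One small point you leave implicit: the crossing edge $f$ is not already in $T$; this holds because the only $T$-edge across the cut is $(u,v)$ itself and $f\neq(u,v)$ since its $M_G$-weight is strictly smaller, and it is needed for the swap to yield a spanning tree.
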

\begin{proof}
	Consider an MST $Z$ for $M_G$, and suppose for contradiction that $Z$ contains an edge $e$ outside $G$.
	Since $e$ belongs to $M_G \setminus G$, any path in $G$ between the endpoints of $e$ consists of at least two edges.
	Consider the (multi) graph obtained from $Z$ by replacing edge $e$ with a shortest path in $G$ between the endpoints of $e$.
	It is a spanning subgraph of $M_G$ of weight $w(Z)$, which contains at least $n+1$ edges (some of which may be multiple edges), and thus at least one cycle. By breaking cycles in this subgraph,
	we obtain a spanning tree of $M_G$ of weight strictly smaller than $w(Z)$, yielding a contradiction to the weight minimality of $Z$.
\end{proof}

\begin{lemma} \label{lightopt}
	Let $(M,\delta)$ be any metric space, $t \ge 1$ be some stretch parameter, and $H$ be the greedy $t$-spanner of $M$.
	For every $t$-spanner $H'$ of the metric space $M_H$ induced by $H$, we have $w(H) \le w(H')$.
\end{lemma}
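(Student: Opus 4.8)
The strategy is to ``flatten'' an arbitrary $t$-spanner $H'$ of $M_H$ onto the edge set of $H$ --- without increasing its weight and without losing the stretch --- and then invoke \lemmaref{obs:spanner_of_greedy}, which already guarantees that $H$ is the \emph{only} $t$-spanner of $H$ contained in $H$. Concretely, given a $t$-spanner $H'$ of $M_H$, I would fix for every edge $(p,q)\in E(H')$ a shortest path $\Pi_H(p,q)$ between $p$ and $q$ in the graph $H$, and define $\tilde H=\bigcup_{(p,q)\in E(H')}\Pi_H(p,q)$. By construction $\tilde H$ is a subgraph of $H$.

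Next I would check the two properties that make this flattening work. \emph{Weight:} the edge set of $\tilde H$ is contained in the union of the paths $\Pi_H(p,q)$, each of which has weight exactly $\delta_H(p,q)$, so $w(\tilde H)\le\sum_{(p,q)\in E(H')}\delta_H(p,q)$; and this last sum equals $w(H')$, since the weight of an edge $(p,q)$ in the induced metric $M_H$ is by definition $\delta_H(p,q)$. \emph{Stretch:} for any $x,y\in V$, concatenating the paths $\Pi_H(z_{i-1},z_i)$ along a shortest $H'$-path $x=z_0,z_1,\dots,z_k=y$ produces a walk in $\tilde H$ of weight $\sum_{i=1}^{k}\delta_H(z_{i-1},z_i)=\delta_{H'}(x,y)$, whence $\delta_{\tilde H}(x,y)\le\delta_{H'}(x,y)\le t\cdot\delta_{M_H}(x,y)=t\cdot\delta_H(x,y)$. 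As distances in $M_H$ coincide with distances in the graph $H$, this is exactly the statement that $\tilde H$ is a $t$-spanner of $H$.

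Finally I would apply \lemmaref{obs:spanner_of_greedy} with the ambient graph taken to be the complete weighted graph representing $(M,\delta)$, whose greedy $t$-spanner is $H$: since $\tilde H\subseteq H$ is a $t$-spanner of $H$, it must equal $H$, and therefore $w(H)=w(\tilde H)\le w(H')$. This is the ``metric'' counterpart of the proof of \theoremref{thm:main}, and I do not expect a real obstacle; the one point that needs care is the bookkeeping in the middle step --- keeping track that the edges of $H'$ carry the $M_H$-weights $\delta_H(\cdot,\cdot)$, that the flattened graph $\tilde H$ never leaves $H$, and that substituting an equally long $H$-path for an edge of $H'$ can only shorten distances, hence cannot cost us the stretch.
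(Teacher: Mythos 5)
Your proposal is correct and matches the paper's own proof essentially verbatim: the paper likewise replaces each edge of $H'$ by a shortest path in $H$, observes the resulting subgraph of $H$ has no larger weight and no larger distances (hence is a $t$-spanner of $H$), and applies \lemmaref{obs:spanner_of_greedy} to conclude it equals $H$, giving $w(H)\le w(H')$.
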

\begin{proof}
	Let $H'$ be a $t$-spanner of $M_H$, and define $H''$ as the subgraph of $H$ obtained from $H'$ by replacing each edge $e$ of $H'$ with a shortest path in $H$ between
	the endpoints of $e$.
	Clearly, the distances in $H''$ are no greater than the respective distances in $H'$.
	Since $H''$ is a subgraph of $H$, it follows that $H''$ is a $t$-spanner for $H$.
	\lemmaref{obs:spanner_of_greedy} implies that $H'' = H$.
	Finally, noting that $w(H'') \le w(H')$, we have $w(H) = w(H'') \le w(H')$.
\end{proof}

The following lemma will be used  for proving the size optimality.
\begin{lemma}\label{lem:metric_spars}
	Let $(M,\delta)$ be any metric space, $t < 2$ be some stretch parameter, and $H$ be the greedy $t$-spanner of $M$.
	For every $t$-spanner $H'$ of the metric space $M_H$ induced by $H$, we have $|H| \le |H'|$.
\end{lemma}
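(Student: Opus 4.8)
The plan is to reuse the skeleton of the proof of \lemmaref{lightopt} but to account for the ``expansion'' of $H'$ into $H$ edge by edge. For every pair $(x,y)\in E(H')$ fix a shortest path $P_{xy}$ in $H$ between $x$ and $y$, with the convention $P_{xy}=\{(x,y)\}$ whenever $(x,y)\in E(H)$ (legitimate: in a metric no $H$-path between $x$ and $y$ is shorter than the direct edge, so that edge realizes $\delta_H(x,y)$). Replacing each edge $(x,y)$ of $H'$ by $P_{xy}$ produces a subgraph of $H$ that, exactly as in \lemmaref{lightopt}, is a $t$-spanner of $H$, hence equals $H$ by \lemmaref{obs:spanner_of_greedy}; thus $\bigcup_{(x,y)\in E(H')}P_{xy}=E(H)$. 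The goal is to upgrade this covering into an injection $\iota\colon E(H)\hookrightarrow E(H')$, which immediately gives $|H|\le|H'|$.

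I would build $\iota$ as follows. Set $\iota(e)=e$ for $e\in E(H)\cap E(H')$. For $e=(u,v)\in E(H)\setminus E(H')$, let $Q_e$ be a shortest $u$--$v$ path in $H'$; since $H'$ is a $t$-spanner of $M_H$, the $M_H$-weight of $Q_e$ is at most $t\,\delta_H(u,v)=t\,w(e)$, and so its expansion through the $P_{xy}$'s is a $u$--$v$ walk in $H$ of weight at most $t\,w(e)$. On the other hand, re-running the argument in the proof of \lemmaref{obs:spanner_of_greedy} with the single removed edge $e$ shows $\delta_{H\setminus e}(u,v)>t\,w(e)$; hence this walk must traverse $e$, so $e\in P_{xy}$ for some edge $(x,y)\in E(Q_e)\subseteq E(H')$, and we set $\iota(e):=(x,y)$. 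Note $(x,y)\ne e$ (since $(x,y)\in E(H')$ while $e\notin E(H')$), and therefore $(x,y)\notin E(H)$ (otherwise $P_{xy}=\{(x,y)\}$ would force $e=(x,y)$). Consequently $\iota$ maps $E(H)\cap E(H')$ into itself and $E(H)\setminus E(H')$ into $E(H')\setminus E(H)$, so it remains only to show that $\iota$ is injective on $E(H)\setminus E(H')$.

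This last point is the crux, and the step I expect to be the main obstacle. Suppose $e=(u_1,u_2)$ and $e'=(u_3,u_4)$ are distinct edges in $E(H)\setminus E(H')$ with $\iota(e)=\iota(e')=(x,y)$; then both lie on the simple path $P_{xy}$, say in the order $x,\dots,u_1,u_2,\dots,u_3,u_4,\dots,y$. Since every sub-path of the shortest path $P_{xy}$ is shortest, $\delta_H(x,y)=\delta_H(x,u_1)+w(e)+\delta_H(u_2,y)$ and $\delta_H(x,y)=\delta_H(x,u_3)+w(e')+\delta_H(u_4,y)$, while the sub-path of $P_{xy}$ from $u_2$ to $y$ contains $e'$ and the one from $x$ to $u_3$ contains $e$, giving $\delta_H(u_2,y)\ge w(e')$ and $\delta_H(x,u_3)\ge w(e)$. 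Because $(x,y)$ is an edge of both $Q_e$ and $Q_{e'}$, we also have $\delta_H(x,y)\le t\,w(e)$ and $\delta_H(x,y)\le t\,w(e')$. Combining these yields $w(e')\le(t-1)\,w(e)$ and $w(e)\le(t-1)\,w(e')$, hence $w(e)\le(t-1)^2 w(e)$ --- impossible when $t<2$, since then $(t-1)^2<1$. This contradiction proves the injectivity of $\iota$, and with it the lemma. (The hypothesis $t<2$ is used exactly in this final inequality; and one genuinely has to re-open the proof of \lemmaref{obs:spanner_of_greedy} rather than invoke it as a black box, because the black-box statement does not pin down which pair is over-stretched in $H\setminus e$.)
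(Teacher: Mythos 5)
Your proof is correct and follows essentially the same route as the paper's: expand a shortest $H'$-path $Q_e$ via covering shortest paths $P_{xy}$ in $H$, use the greedy property to force $e$ onto its expansion, define the resulting injection, and kill collisions using $t<2$. The only cosmetic differences are that you derive the two inequalities $w(e')\le(t-1)w(e)$ and $w(e)\le(t-1)w(e')$ and multiply them (the paper simply takes the lighter of the two edges and notes $w(x,y)\ge 2\min\{w(e),w(e')\}>t\cdot\min\{w(e),w(e')\}$), and that you spell out explicitly why edges of $E(H)\cap E(H')$ cannot collide with the others.
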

\begin{proof}
	Denote by $w_H$ the weight function of $H$, i.e., for any edge $(u,v)\in H$, $w_H(u,v)=\delta(u,v)$,
	and for any path $P \in H$, $w_H(P) = \sum_{e \in P} w_H(e)$.
	Similarly, denote by $w_{H'}$ the weight function of $H'$, i.e., for any $(u,v)\in H'$, $w_{H'}(u,v)=\delta_{M_H}(u,v)$,
	and for any path $P \in H$, $w_{H'}(P) = \sum_{e \in P} w_{H'}(e)$.
	For every edge $e'\in H'$, let $P_{e'}$ be a shortest path between the endpoints of $e'$ in $H$; by definition, we have $w_{H'}(e')=w_H(P_{e'})$.
	We say that edge $e'\in H'$ \emph{covers} all edges of $P_{e'}$, and symmetrically, all edges of $P_{e'}$ are \emph{covered} by $e'$.
	(An edge $e' \in H \cap H'$ covers itself.)
	
	For each edge $e$ in $H \setminus H'$, let $\mathcal{Q}_{e}$ be a shortest path
	between the endpoints of $e$ in $H'$. Since $H'$ is a $t$-spanner for $M_H$, we have $w_{H'}(\mathcal{Q}_{e}) \le t\cdot w_H(e)$.
	Observe that the edges in $\cup_{e'\in\mathcal{Q}_{e}} P_{e'}$ form a  path $\Pi_e$ in $H$ between the endpoints of $e$.
	(It will be shown next that the path $\Pi_e$ is not simple.)
	We have
	$$w_{H}(\Pi_e) ~\le~ \sum_{e'\in\mathcal{Q}_{e}}w_{H}(P_{e'}) ~= \sum_{e'\in\mathcal{Q}_{e}}w_{H'}(e') ~=~ w_{H'}(\mathcal{Q}_{e}) ~\le~ t\cdot w_{H}(e)~.$$

	Next, we argue that the edge $e$ must belong to $\Pi_e$.
	Indeed, otherwise the edges of
	$\Pi_e$ contain a simple path in $H$ between the endpoints of $e$ of weight bounded by $t \cdot w_H(e)$,
	implying that the heaviest edge among the edges of this path and $e$ would not be added to the greedy $t$-spanner $H$.
	Consequently, at least one edge $e'$ in $\mathcal{Q}_{e}$ must cover $e$.
	
	We define an injection $f:H\rightarrow H'$ as follows. For each edge $e\in H \cap H'$, $f(e)$ is defined as $e$; in this case edge $e = f(e)$ covers itself.
	For each $e \in H \setminus H'$, $f(e)$ is defined to be an arbitrary edge of $\mathcal{Q}_{e}$ that covers $e$.
	To see that $f$ is injective, suppose for contradiction the existence of two distinct edges $e_{1}$ and $e_{2}$ in $H$
	and an edge $e' \in H'$	such that $f(e_{1})=f(e_{2})= e' \in H'$. It must hold that $e_1$ and $e_2$ are in $H \setminus H'$.
	Assume without loss of generality that $w(e_{1})\le w(e_{2})$.
	Since both $e_{1}$ and $e_{2}$ are covered by $e'$, it follows that $w_{H'}(e')\ge w_{H}(e_{1})+w_{H}(e_{2})\ge2\cdot w_{H}(e_{1})$.
	On the other hand, by the definition of $f$, the shortest path $\mathcal{Q}_{e_1}$ in $H'$ between the endpoints of $e_{1}$ contains the edge $e' = f(e_1)$.
	Hence the weight of a shortest path in $H'$ between the endpoints of $e_1$ is given by $w_{H'}(\mathcal{Q}_{e_1}) \ge w_{H'}(e') \ge2\cdot w_{H}(e_{1})>t\cdot w_{H}(e_{1})$, which contradicts	the fact that $H'$ is a $t$-spanner for $H$. It follows that $f$ is injective, from which we conclude that  $|H|\le|H'|$.
\end{proof}

Let $\mathcal{M}(n,\ddim)$ denote the family of $n$-point metric spaces with doubling dimension bounded by $\ddim$,
for any $n$ and $\ddim$.
The following observation shows that a small ``stretching'' of any metric space does not change the doubling dimension of the metric space by much.
\begin{observation}\label{lem:doubling_embedding}
	Let $H$ be a $t$-spanner of an arbitrary metric space $M \in \mathcal{M}(n,\ddim)$, for $t\le 2$.
	Then the metric space $M_H$ induced by $H$ belongs to $\mathcal{M}(n,2\ddim)$.
\end{observation}
\begin{proof}
	Clearly, any ball of radius $r$ in the ``stretched'' metric space $M_{H}$ is contained in the respective ball of the original metric space $M$.
	By definition, this ball can be covered by $2^{2\ddim}$ balls of radius $\frac{r}{4}$ in $M$,
	and so by $2^{2\ddim}$ balls of radius $t\cdot\frac{r}{4}\le\frac{r}{2}$ in the stretched metric space $M_H$.
\end{proof}	

The existential near-optimality result for doubling metrics is summarized in the following theorem.
\begin{theorem}[Greedy is near-optimal in doubling metrics]\label{theorem:doubling_light}
	For every metric $M \in\mathcal{M}(n,\ddim)$ and any stretch parameter $t < 2$, the greedy $t$-spanner $H$ of $M$
	has at most $OPT^{sparse}_t(\mathcal{M}(n,2\ddim))$ edges and lightness at most $OPT^{light}_t(\mathcal{M}(n,2\ddim))$.
\end{theorem}
\begin{proof}
	Let $M$ be an arbitrary metric space in $\mathcal{M}(n,\ddim)$, let $H=\left(V,E\right)$ be
	the greedy $t$-spanner for $M$,	and let $M_H$ be the metric space induced by $H$.	
	By \observationref{lem:doubling_embedding}, $M_H \in \mathcal{M}(n,2\ddim)$.
	Hence, there exist $t$-spanners $\mathcal{H}^{sparse}$ and $\mathcal{H}^{light}$ for $M_H$ with at most $OPT^{sparse}_t(\mathcal{M}(n,2\ddim))$
	edges and lightness at most $OPT^{light}_t(\mathcal{M}(n,2\ddim))$, respectively.
	\lemmaref{lem:metric_spars} implies that  $|H|\le |\mathcal H^{sparse}|$, from which the size bound on $H$ immediately follows.
	As for the lightness bound on $H$, note that $\mathcal{H}^{light}$ is computed on top of $M_H$ rather than the original metric space $M$.
	Nevertheless, \observationref{fct:greedy contains MST} and \observationref{mst-metric} imply that $M$ and $M_H$ have the same MST $Z$.
	Since the lightness of $\mathcal{H}^{light}$ is at most $OPT^{light}_t(\mathcal{M}(n,2\ddim))$, it follows that
	$$OPT^{light}_t(\mathcal{M}(n,2\ddim)) ~\ge~ \Psi(\mathcal{H}^{light}) ~=~ \frac{w(\mathcal{H}^{light})}{w(MST(M_H))} ~=~ \frac{w(\mathcal{H}^{light})}{w(Z)},$$
	hence $w(\mathcal{H}^{light}) \le OPT^{light}_t(\mathcal{M}(n,2\ddim)) \cdot w(Z)$.
	By \lemmaref{lightopt}, we have $w(H) \le w(\mathcal {H}^{light})$, hence the lightness of $H$ satisfies
	\begin{eqnarray*} \Psi(H)  = \frac{w(H)}{w(MST(M))} ~=~ \frac{w(H)}{w(Z)} ~\le~ \frac{w(\mathcal{H}^{light})}{w(Z)}
		~\le~ OPT^{light}_t(\mathcal{M}(n,2\ddim))~.
	\end{eqnarray*}
\end{proof}

By \theoremref{thm:smid09}, the greedy $(1+\eps)$-spanner for $n$-point doubling metrics
has $O(n)$ edges and lightness $O(\log n)$,
where the $O$-notation hides a multiplicative term of $(1/\epsilon)^{O(\ddim)}$.
Applying \theoremref{theorem:doubling_light} in conjunction with  \theoremref{thm:Got15}, we reduce the lightness bound of the greedy $(1+\eps)$-spanner to constant.
\begin{corollary}\label{cor:Greedy_G}
	For every metric space $(M,\delta)$  in $\mathcal{M}(n,\ddim)$ and any parameter $0<\epsilon<\frac{1}{2}$,  the greedy $(1+\epsilon)$-spanner has
	$n(1/\epsilon)^{O(\ddim)}$  edges and lightness $(\ddim/\epsilon)^{O(\ddim)}$.
\end{corollary}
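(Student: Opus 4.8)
The plan is to apply \theoremref{theorem:doubling_light} with stretch parameter $t=1+\epsilon$; since $0<\epsilon<\frac12$ we have $t<2$, so the hypotheses of that theorem (and of \lemmaref{lem:metric_spars} and \observationref{lem:doubling_embedding}, which it invokes) are met. The theorem reduces the task to exhibiting, for \emph{every} $n$-point metric space of doubling dimension $2\cdot\ddim$, a $(1+\epsilon)$-spanner with a good edge bound $m(n,2\cdot\ddim,1+\epsilon)$ and a good lightness bound $l(n,2\cdot\ddim,1+\epsilon)$; the greedy spanner on the original metric then inherits both bounds.

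First I would handle the size bound. By \theoremref{thm:doubling_degree}, every $n$-point metric of doubling dimension $\ddim'$ admits a $(1+\epsilon)$-spanner of degree $\epsilon^{-O(\ddim')}$, hence with at most $n\cdot\epsilon^{-O(\ddim')}$ edges; so we may take $m(n,\ddim',1+\epsilon)=n\,(1/\epsilon)^{O(\ddim')}$. Substituting $\ddim'=2\cdot\ddim$ gives $m(n,2\cdot\ddim,1+\epsilon)=n\,(1/\epsilon)^{O(\ddim)}$, which is exactly the claimed edge bound for the greedy spanner.

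Next I would handle the lightness bound. By \theoremref{thm:Got15}, every $n$-point metric of doubling dimension $\ddim'$ admits a $(1+\epsilon)$-spanner of lightness $(\ddim'/\epsilon)^{O(\ddim')}$, so we may take $l(n,\ddim',1+\epsilon)=(\ddim'/\epsilon)^{O(\ddim')}$. Substituting $\ddim'=2\cdot\ddim$ yields $l(n,2\cdot\ddim,1+\epsilon)=(2\ddim/\epsilon)^{O(\ddim)}=(\ddim/\epsilon)^{O(\ddim)}$, since absorbing the constant $2$ only affects the hidden constant in the exponent. Plugging both bounds into the conclusion of \theoremref{theorem:doubling_light} completes the proof.

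There is no genuine obstacle here: the corollary is a direct instantiation of \theoremref{theorem:doubling_light}. The only points requiring a moment of care are that we must draw the edge bound and the lightness bound from two \emph{different} external constructions (\theoremref{thm:doubling_degree} and \theoremref{thm:Got15}), rather than from a single spanner, and that the doubling-dimension blowup from $\ddim$ to $2\cdot\ddim$ must be checked to be harmless — which it is, since both target quantities depend on $\ddim$ only through $(\cdot)^{O(\ddim)}$-type expressions that are stable under a constant-factor change in $\ddim$.
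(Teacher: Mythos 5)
Your proposal is correct and in substance follows the paper's own derivation: the lightness bound is obtained exactly as in the paper, by instantiating \theoremref{theorem:doubling_light} with Gottlieb's construction (\theoremref{thm:Got15}) and absorbing the doubling-dimension blowup $\ddim \mapsto 2\cdot\ddim$ into the $(\cdot)^{O(\ddim)}$ exponents. The one place you diverge is the size bound: the paper simply quotes the known fact that the greedy $(1+\eps)$-spanner of an $n$-point doubling metric has $n(1/\eps)^{O(\ddim)}$ edges \cite{Smid09}, whereas you re-derive it through \theoremref{theorem:doubling_light} via the bounded-degree spanner of \theoremref{thm:doubling_degree}; this is equally valid, and arguably more self-contained since it makes the size bound another instance of the same existential-optimality machinery, at the cost of invoking a second external construction. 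One technical point you flag but leave implicit and should spell out: as stated, the hypothesis of \theoremref{theorem:doubling_light} posits a \emph{single} $t$-spanner achieving both $m(n,\ddim,t)$ and $l(n,\ddim,t)$, while you feed in two different constructions. This is harmless, but the clean way to justify it is either to apply the theorem twice --- once with $m(n,\ddim,t)=n(1/\eps)^{O(\ddim)}$ from \theoremref{thm:doubling_degree} together with a trivial lightness bound (in a metric space every edge weighs at most $w(MST)$, so lightness is at most the number of edges), and once with the trivial size bound $\binom{n}{2}$ together with $l(n,\ddim,t)=(\ddim/\eps)^{O(\ddim)}$ from \theoremref{thm:Got15} --- or to note that in the proof of \theoremref{theorem:doubling_light} the size and lightness conclusions are derived independently, via \lemmaref{lem:metric_spars} and \lemmaref{lightopt} respectively, each applied to whichever $t$-spanner of $M_H$ one chooses.
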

\noindent {\bf Remark.}
\corollaryref{cor:Greedy_G} shows that the greedy $(1+\eps)$-spanner in doubling metrics achieves optimal bounds on the size and the lightness, \emph{disregarding dependencies on $\eps$ and the doubling dimension}.
However, improving these dependencies is a fundamental challenge of practical importance.
By \theoremref{theorem:doubling_light}, any improvement whatsoever in the dependencies on $\eps$ and the doubling dimension
on either the size or the lightness of \emph{any spanner construction} for doubling metrics -- would trigger a similar improvement to the greedy spanner.
Note that the recent result of Borradaile \etal ~\cite{BLW19} shows that the greedy $(1+\eps)$-spanner has lightness  $(1/\epsilon)^{O(\ddim)}$

\section{The Approximate-Greedy Spanner in Doubling Metrics is Light}\label{sec:Fast_alg}
\corollaryref{cor:Greedy_G} shows that the greedy $(1+\eps)$-spanner in doubling metrics achieves near-optimal bounds on the size and the lightness.
Nevertheless, this spanner has two major disadvantages.
First, as mentioned in the introduction, there exist metric spaces with doubling dimension 1 for which its degree may be unbounded.
(This is in contrast to $d$-dimensional Euclidean metrics, where the greedy $(1+\eps)$-spanner has degree $(1/\eps)^{O(d)}$.)
Second, it cannot be constructed within sub-quadratic time in doubling metrics due to a lower bound of \cite{HM06}.
In fact, even in $d$-dimensional Euclidean metrics, the state-of-the-art implementation of the greedy $(1+\eps)$-spanner requires time $(1/\eps)^{O(d)} (n^2 \log n)$ \cite{BCFMS10}.

\sloppy Building on \cite{DHN93,DN97}, Gudmundsson et al.\ \cite{GLN02} devised a much faster algorithm that follows the greedy approach,
hereafter Algorithm \texttt{Approximate-Greedy}.
The runtime of this algorithm is $(1/\eps)^{O(d)}(n \log n)$, yet the degree and lightness of the approximate-greedy spanner produced by the algorithm are both bounded by $(1/\eps)^{O(d)}$, just as with the greedy spanner for Euclidean metrics.
The runtime analysis of Algorithm \texttt{Approximate-Greedy} \cite{GLN02} does not exploit any properties of Euclidean geometry.
Specifically, it relies on the triangle inequality, which applies to arbitrary metric spaces, and on standard packing arguments (cf.\ \lemmaref{lem:doubling_packing}), which apply to arbitrary doubling metrics.
Therefore, the runtime of Algorithm \texttt{Approximate-Greedy} remains $(1/\eps)^{O(d)}(n \log n)$ in arbitrary doubling metrics.
Moreover, the degree bound of  $(1/\eps)^{O(d)}$ applies to arbitrary doubling metrics as well.
(We refer to Chapter 15 in \cite{NS07} for an excellent description of this algorithm and its analysis.)

In this section we show that the approximate-greedy spanner of \cite{GLN02} has constant lightness in arbitrary doubling metrics.
Consequently, Algorithm  \texttt{Approximate-Greedy} provides an $O(n\log n)$-time construction of $(1+\eps)$-spanners in doubling metrics with lightness and degree both bounded by constants.

\subsection{A Rough Sketch of Algorithm \texttt{Approximate-Greedy}}
In this section we provide a  very rough sketch of Algorithm \texttt{Approximate-Greedy}, aiming to highlight the high-level ideas behind it.
This outline is not required for the analysis that is given in \sectionref{bounding}; it is provided here for clarity and completeness.

In metric spaces, the greedy algorithm sorts the ${n \choose 2}$ interpoint distances and examines the edges by non-decreasing order of weight.
For each edge that is examined for inclusion in the spanner, the distance between its endpoints in the current spanner is computed.
This is expensive for two reasons: (1) The number of examined interpoint distances is quadratic in $n$.
(2) Computing the \emph{exact} spanner distance between two points is   costly.

Suppose we aim for a stretch of $t = 1+\eps$, and let $t'$ be an appropriate parameter satisfying $t' = 1 + O(\eps) < t$.
(Refer to \cite{GLN02,NS07} for the exact constant hiding in the $O$-notation of $O(\eps)$.)
Instead of examining all ${n \choose 2}$ interpoint distances, Algorithm \texttt{Approximate-Greedy} computes a
bounded degree $\sqrt{t/t'}$-spanner $G' = (M,E',\delta)$ for the input metric space $(M,\delta)$,
and simulates the greedy algorithm with stretch parameter $\sqrt{t \cdot t'}$ only on the edges of $G'$.
The output of the algorithm is a $\sqrt{t \cdot t'}$-spanner $G = (M,E,\delta)$ for $G'$,
which is a $t$-spanner for the original metric space $(M,\delta)$ by the ``transitivity'' of spanners.
A spanner $G'$ of degree $(1/\eps)^{O(\ddim)}$ can be constructed in $(1/\eps)^{O(\ddim)} (n \log n)$ time via \theoremref{thm:doubling_degree}.
Since the output $t$-spanner $G$ for $(M,\delta)$ is a subgraph of $G'$, its degree will be at most $(1/\eps)^{O(\ddim)}$.

The greedy simulation is applied only on the edges of $G'$ that are sufficiently ``heavy''.
Formally, let $D$ denote the maximum weight of any edge of the bounded degree spanner $G'$, and let $E_0$ be the set of \emph{light edges} in $E'$, namely, of weight at most $D/n$.
As $|E_0| \le |E'| = O(n)$, we have $w(E_0) = O(D) = O(MST(M))$.
All light edges are taken to the output spanner $G$, and the greedy simulation is applied only on the edges of $E' \setminus E_0$.
(So the output spanner $G$ will contain \emph{all} edges of $E_0$ and \emph{some} edges of $E' \setminus E_0$.)

As mentioned, computing the \emph{exact} distance between two points is costly;
using Dijkstra's algorithm, it requires $O(n \log n)$ time (see, e.g., Section 2.5 and Corollary 2.5.10 in \cite{NS07}).
Since $G'$ has $O(n)$ edges, the overall runtime will be $O(n^2 \log n)$.
To speed up the computation time, Algorithm \texttt{Approximate-Greedy} does not compute the exact distance between two points, but rather an approximation of that distance.
This is achieved by maintaining a much simpler and coarser \emph{cluster graph} that approximates the original distances, on which the distance queries are performed.
More specifically, the algorithm partitions the edge set $E' \setminus E_0$ into $\log_\mu n$ buckets,  for an appropriate parameter $1 < \mu = O(\log n)$,
such that edge weights within each bucket differ by at most a factor of $\mu$.
Then it examines the edges of $E' \setminus E_0$ by going from one bucket to the next, examining edges by non-decreasing order of weight.
Whenever all edges of some bucket have been examined, the cluster graph is updated according to the new edges that were added to the spanner.
The idea is to periodically make the cluster graph simpler and coarser, so that the shortest path computations made on it will be fast.
The bottom-line is that one does not simulate the greedy algorithm (with stretch parameter $\sqrt{t \cdot t'}$) on the edge set $E' \setminus E_0$, but rather an \emph{approximate version} of it.

\subsection{Bounding the Lightness of the  \texttt{Approximate-Greedy} Spanner} \label{bounding}
As mentioned, the runtime of Algorithm \texttt{Approximate-Greedy} is $(1/\eps)^{O(\ddim)}(n \log n)$  in arbitrary doubling metrics.
In what follows let $G = (M,E,\delta)$ be the $t$-spanner for $(M,\delta)$  returned by Algorithm \texttt{Approximate-Greedy}.
Since $G$ is a subgraph of the bounded degree spanner $G' = (M,E',\delta)$, its degree is $(1/\eps)^{O(\ddim)}$.

It remains to bound the lightness of $G$.
The lightness argument of \cite{GLN02}, which relies on previous works \cite{DHN93,DN97},
is based on rather deep properties from Euclidean geometry, most notably the \emph{leapfrog property}.
In particular, this argument does not apply to arbitrary doubling metrics.

Instead, we employ the following lemma, which lies at the heart of the lightness analysis of \cite{GLN02}.
While this lemma applies to arbitrary doubling metrics, the way it was used in \cite{GLN02} does not extend to arbitrary doubling metrics.
Specifically, it was used in \cite{GLN02} to show that the edge set $E \setminus E_0$ satisfies the leapfrog property.
(Recall that $E_0$ is the set of light edges in $G' = (M,E',\delta)$, all of which are taken to the approximate-greedy spanner $G = (M,E,\delta)$.)
In Euclidean metrics, it has been proved \cite{DHN93,NS07} that any edge set satisfying the leapfrog property has constant lightness, but this proof does not carry over to arbitrary doubling metrics.
\begin{lemma}[Lemma 17 in \cite{GLN02}]\label{lem:GLN_second_path}
	Let $e=(u,v) \in E \setminus E_0$. The weight of the second shortest path between $u$ and $v$ in the approximate-greedy spanner $G$ is greater than $t' \cdot w(e)$.
	(If there are multiple shortest paths between $u$ and $v$, then the weight of the second shortest path equals the weight of the shortest path.)
\end{lemma}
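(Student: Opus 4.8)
The plan is to trace through the behaviour of Algorithm \texttt{Approximate-Greedy} on the edge $e=(u,v)$ at the moment it is examined, and argue that the approximate greedy condition that caused $e$ to be added to $G$ forces a strong lower bound on \emph{every} path between $u$ and $v$ in the final spanner $G$ other than (a shortest) one. Concretely, I would first recall that $e$ is a ``heavy'' edge, i.e.\ $e \in E' \setminus E_0$, so the greedy simulation was actually run on it (light edges are added unconditionally and say nothing about path weights). The approximate-greedy simulation uses a coarse cluster graph to estimate the spanner distance $\delta_{G_{\text{cur}}}(u,v)$ in the current spanner $G_{\text{cur}}$, and $e$ was added precisely because this estimate exceeded $\sqrt{t\cdot t'}\cdot w(e)$. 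The key point is that the cluster graph is built so that its estimate \emph{underestimates} the true current spanner distance by at most a factor related to $\sqrt{t/t'}$ (this is exactly the role of the parameters $\mu$, the bucketing, and the spanner $G'$ of stretch $\sqrt{t/t'}$), so the true distance $\delta_{G_{\text{cur}}}(u,v)$ was at least roughly $t'\cdot w(e)$ — and in fact the parameters are tuned so the bound one gets is at least $t'\cdot w(e)$ for the \emph{shortest} alternative path.

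**Next I would** handle the two subtleties that separate ``$\delta_{G_{\text{cur}}}(u,v) > t'\cdot w(e)$ at insertion time'' from the statement about the final spanner $G$ and about the \emph{second} shortest path. The first subtlety is monotonicity: edges are examined in non-decreasing order of weight, and every edge added after $e$ has weight $\ge w(e)$; a standard argument (as in \lemmaref{obs:spanner_of_greedy}) shows that adding later heavier edges cannot create a short path between the endpoints of an earlier, lighter edge — if it did, the later edge completing that short path would itself have been redundant and not added. Hence the lower bound on alternative-path weight that held at insertion time of $e$ persists in the final $G$. The second subtlety is the phrase ``second shortest path'': once $e \in G$, the shortest $u$--$v$ path in $G$ is simply $e$ itself (or a path of weight $w(e)$, if ties occur), and the parenthetical remark in the lemma tells us how to read ``second shortest'' in the tie case. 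So the quantity we must bound below is the weight of the cheapest $u$--$v$ path in $G$ that is not (equivalent to) $e$; any such path, at the time $e$ was considered, was a path in $G_{\text{cur}}$ avoiding $e$ (since $e \notin G_{\text{cur}}$), and so had weight $\ge t'\cdot w(e)$ by the insertion condition. Combining with persistence, the second shortest path in the final $G$ has weight $> t'\cdot w(e)$.

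**The main obstacle** will be pinning down exactly why the approximate, cluster-graph-based distance estimate relates to the true current spanner distance by the right factor, so that the clean inequality ``$> t'\cdot w(e)$'' comes out rather than something lossier — this is where one must invoke the specific guarantees of the cluster graph construction (the bucketing into $\log_\mu n$ scales with intra-bucket weight ratio $\le \mu$, the use of $G'$ of stretch $\sqrt{t/t'}$, and the choice of the simulation stretch $\sqrt{t\cdot t'}$), essentially unpacking the proof of Lemma 17 in \cite{GLN02} and checking that nothing in it used Euclidean geometry beyond the triangle inequality and packing (which we have in doubling metrics via \lemmaref{lem:doubling_packing}). Since this is precisely the content of \lemmaref{lem:GLN_second_path} as cited from \cite{GLN02}, and that reference's proof is metric in nature, I would present the argument as: recall the insertion condition, observe it is a purely metric condition, apply the persistence argument, and translate to ``second shortest path,'' deferring the internal cluster-graph calculation to \cite{GLN02,NS07}.
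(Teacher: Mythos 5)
The paper does not prove this statement at all: it is imported verbatim as Lemma 17 of \cite{GLN02}, and the only argument the paper itself makes (in the surrounding text of Section 5) is that the analysis in \cite{GLN02} relies only on the triangle inequality and standard packing arguments, hence carries over from Euclidean to arbitrary doubling metrics. Your proposal, in the end, does the same thing: its technical core --- the two-sided quality of the cluster-graph distance estimates, the role of the bucketing parameter $\mu$, and the interplay of the stretch parameters $\sqrt{t/t'}$ and $\sqrt{t\cdot t'}$ --- is explicitly deferred to \cite{GLN02,NS07}. As a citation with context, that is consistent with what the paper does.

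However, the one step you do try to argue yourself, the ``persistence'' step, is not sound as stated. You claim, by analogy with \lemmaref{obs:spanner_of_greedy}, that a later, heavier edge completing a short $u$--$v$ path ``would itself have been redundant and not added.'' That reasoning is valid for the exact greedy algorithm, whose insertion test is the true distance in the current spanner; Algorithm \texttt{Approximate-Greedy} instead tests an estimate computed on a coarse cluster graph that is only rebuilt between buckets, so a later edge can well be added even though an exact test would have rejected it, and conversely the estimate at the time $e$ was examined need not lower-bound the true current distance by the factor you assert without a separate error analysis. This is exactly why the lemma is stated with the weaker threshold $t'$ rather than the simulation stretch $\sqrt{t\cdot t'}$: the gap between the two is what absorbs the cluster-graph approximation error, and proving that the error fits inside this gap is the real content of Lemma 17 in \cite{GLN02} (see Chapter 15 of \cite{NS07}). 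So either cite the lemma outright, as the paper does, or carry out that error analysis; the exact-greedy persistence argument cannot substitute for it.
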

\noindent
{\bf Remark.} The parameter $t'$ in the statement of this lemma depends on the stretch parameters of the spanners $G'$ and $G$
that are constructed by Algorithm \texttt{Approximate-Greedy}.
Specifically, recall that the output spanner $G$ is a $\sqrt{t \cdot t'}$-spanner for $G'$,
which is, in turn, a $\sqrt{t/t'}$-spanner for the input metric space $M$.

We will use the following observation, due to \cite{Smid07}. We include a proof for completeness.
\begin{observation} [Lemma 1.7 in \cite{Smid07}] \label{mstsimple2}
	Let $H$ be an arbitrary weighted graph, and let $t$ be any stretch parameter.
	For any $t$-spanner $H'$ of $H$, $w(MST(H')) \le t \cdot w(MST(H))$.
\end{observation}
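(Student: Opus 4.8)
The plan is to exhibit an explicit connected spanning subgraph of $H'$ of weight at most $t\cdot w(MST(H))$, and then pass to a spanning tree inside it. Fix an MST $Z$ of $H$, so $Z$ has exactly $n-1$ edges and $w(Z)=w(MST(H))$. For each edge $e=(u,v)\in Z$, note that $\delta_H(u,v)\le w(e)$ trivially (the edge $e$ itself is a $u$–$v$ path in $H$), and since $H'$ is a $t$-spanner of $H$ we have $\delta_{H'}(u,v)\le t\cdot\delta_H(u,v)\le t\cdot w(e)$. Let $P_e$ be a shortest path between $u$ and $v$ in $H'$, so $w(P_e)\le t\cdot w(e)$.

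Next I would consider the subgraph $F=(V,\bigcup_{e\in Z}E(P_e),w)$ of $H'$. It is spanning, and it is connected: for any two vertices, the tree $Z$ provides a path between them, and replacing each edge $e$ of that path by the corresponding path $P_e\subseteq F$ yields a walk in $F$ connecting the same pair. Its weight satisfies
$$w(F)~=~\sum_{f\in\bigcup_{e\in Z}E(P_e)}w(f)~\le~\sum_{e\in Z}w(P_e)~\le~\sum_{e\in Z}t\cdot w(e)~=~t\cdot w(Z)~=~t\cdot w(MST(H)),$$
where the first inequality is because the edge sets $E(P_e)$ may overlap (an edge of $H'$ shared by several paths is counted once on the left but possibly several times in the middle sum). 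Since any connected spanning subgraph contains a spanning tree of no greater weight, $w(MST(H'))\le w(F)\le t\cdot w(MST(H))$, as claimed.

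There is essentially no obstacle here; the only point that needs a word of care is that $\bigcup_{e\in Z}E(P_e)$ is a genuine (simple) edge set of $H'$ rather than a multiset, so that the weight bound uses an inequality (paths may reuse edges) rather than an equality — exactly the same subtlety that arises in \lemmaref{lem:metric_spars} and \lemmaref{lightopt}. Everything else is immediate from the definitions of $t$-spanner and MST.
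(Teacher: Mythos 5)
Your proposal is correct and follows essentially the same route as the paper: replace each edge of an MST of $H$ by a $t$-spanner path in $H'$, observe that the union is a connected spanning subgraph of $H'$ of weight at most $t\cdot w(MST(H))$, and extract a spanning tree from it. The care you take about overlapping paths (counting shared edges once) is exactly what the paper's ``break cycles'' step handles.
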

\begin{proof}
	Consider an MST $Z$ for $H$.
	Replace each edge of $Z$ by a $t$-spanner path in $H'$ between the endpoints of that edge, and then break cycles.
	The resulting structure $Z'$ is a spanning tree of $H'$, hence $w(MST(H')) \le w(Z')$, and we have
	$w(MST(H')) \le w(Z') \le t \cdot w(Z) = t \cdot w(MST(H))$.
\end{proof}

The following lemma bounds the lightness of $G$.
Its proof is based on the somewhat surprising observation that the lightness of the $t$-spanner $G$ produced by Algorithm \texttt{Approximate-Greedy} is existentially near-optimal with respect to stretch parameter $t' < t$ (rather than $t$).
We remark that $G$ is not a greedy spanner, but rather an approximate-greedy spanner,
and it is inherently different than the greedy $t$-spanner and the greedy $t'$-spanner.
In particular, its weight may be larger than the weights of both these greedy spanners.
Nevertheless, our existential near-optimality argument suffices to derive the required lightness bound.
\begin{lemma}
	The lightness of $G$ is $(\frac{\ddim}{t'-1})^{O(\ddim)}$.
\end{lemma}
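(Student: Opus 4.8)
The plan is to mimic the existential-optimality argument of Theorem~\ref{theorem:doubling_light}, but with two twists dictated by the structure of the approximate-greedy spanner $G$. First, $G$ is not a greedy spanner, so we cannot invoke \lemmaref{obs:spanner_of_greedy} directly; instead we will use \lemmaref{lem:GLN_second_path}, which plays the role of the greedy ``no shortcut'' property but only for edges in $E\setminus E_0$ and only with respect to the smaller stretch $t'$. Second, because the light edges $E_0$ are thrown into $G$ unconditionally, we will bound $w(E_0)=O(D)=O(w(MST(M)))$ separately (this is already observed in the sketch of the algorithm) and apply the existential argument only to the ``heavy'' part $E\setminus E_0$.

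Concretely, I would argue as follows. Let $M_G$ be the metric induced by $G$. By \observationref{lem:doubling_embedding} applied with the spanner $G$ of $M$ (which has stretch $t=1+\eps<2$), the doubling dimension of $M_G$ is at most $2\ddim$. Now invoke \theoremref{thm:Got15} on $M_G$ with stretch parameter $t'-1$ in place of $\eps$: there is a $t'$-spanner $\mathcal H$ of $M_G$ with lightness $(\frac{\ddim}{t'-1})^{O(\ddim)}$, i.e. $w(\mathcal H)\le (\frac{\ddim}{t'-1})^{O(\ddim)}\cdot w(MST(M_G))$. The key step is then a sparsification-style inequality in the spirit of \lemmaref{lightopt}: I claim $w(E\setminus E_0)\le w(\mathcal H)$. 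To see this, pull $\mathcal H$ back to a subgraph $\mathcal H''$ of $G$ by replacing each edge of $\mathcal H$ with a shortest $G$-path between its endpoints, so $w(\mathcal H'')\le w(\mathcal H)$ and $\mathcal H''$ is still a $t'$-spanner for $G$. For every edge $e=(u,v)\in E\setminus E_0$, \lemmaref{lem:GLN_second_path} says the second shortest $u$--$v$ path in $G$ has weight $>t'\cdot w(e)$; since $\mathcal H''$ is a $t'$-spanner of $G$, any $u$--$v$ path in $\mathcal H''$ has weight $\le t'\cdot w(e)$, and hence (by the parenthetical in \lemmaref{lem:GLN_second_path}) it must literally be \emph{the} shortest $u$--$v$ path in $G$ and in particular must use the edge $e$ itself. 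Therefore every edge of $E\setminus E_0$ lies on some $G$-path used to build $\mathcal H''$; a counting/charging argument (each edge $e\in E\setminus E_0$ is charged to an edge of $\mathcal H$ whose shortest-path expansion contains $e$, and this charging is injective because two distinct edges of $E\setminus E_0$ on a common shortest path would contradict the strictness in \lemmaref{lem:GLN_second_path} exactly as in the injectivity proof of \lemmaref{lem:metric_spars}) gives $w(E\setminus E_0)\le w(\mathcal H'')\le w(\mathcal H)$.

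It remains to convert this into a lightness bound against $w(MST(M))$. By \observationref{fct:greedy contains MST} (applied to the degree-bounded spanner construction) and \observationref{mst-metric}, $M$ and $M_G$ have a common MST $Z$ — actually one only needs $w(MST(M_G))\le t\cdot w(MST(M))$, which follows from \observationref{mstsimple2} since $G$ is a $t$-spanner of $M$; I would use \observationref{mstsimple2} to avoid relying on $G$ containing an MST. Combining, $w(E\setminus E_0)\le w(\mathcal H)\le (\frac{\ddim}{t'-1})^{O(\ddim)}\cdot w(MST(M_G))\le (\frac{\ddim}{t'-1})^{O(\ddim)}\cdot t\cdot w(MST(M))$, and adding the light part $w(E_0)=O(w(MST(M)))$ yields $w(G)=w(E_0)+w(E\setminus E_0)\le (\frac{\ddim}{t'-1})^{O(\ddim)}\cdot w(MST(M))$, absorbing the $t=1+\eps<2$ factor into the constant. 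The main obstacle I anticipate is the injectivity/charging argument: I need to be careful that \lemmaref{lem:GLN_second_path} is stated only for $e\in E\setminus E_0$, so the shortest-path expansions of $\mathcal H$'s edges may legitimately pass through light edges in $E_0$, and the charging map must be restricted to (and shown injective on) $E\setminus E_0$ only — exactly paralleling how \lemmaref{lem:metric_spars} handles $H\cap H'$ versus $H\setminus H'$, with ``$E_0$'' here playing a role loosely analogous to ``$H\cap H'$.''
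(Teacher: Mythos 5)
Your proposal is correct and follows essentially the same route as the paper: pass to the induced metric $M_G$ (doubling dimension at most $2\ddim$), take Gottlieb's $t'$-spanner, pull it back to a $t'$-spanner $\mathcal H''\subseteq G$, and use \lemmaref{lem:GLN_second_path} to force every edge of $E\setminus E_0$ into $\mathcal H''$, bounding $w(E_0)$ and the MST weights separately via \observationref{mst-metric} and \observationref{mstsimple2}. The injective charging you worry about at the end is unnecessary: once $E\setminus E_0\subseteq\mathcal H''$, positivity of the edge weights already gives $w(E\setminus E_0)\le w(\mathcal H'')\le w(\mathcal H)$, which is exactly how the paper concludes.
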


\begin{proof}					
	Recall that $G = (M,E,\delta)$ is a $t$-spanner for $M$, where $t = 1+\eps, \eps < 1$,  and let $M_G$ be the metric space induced by $G$.
	By \observationref{lem:doubling_embedding}, the doubling dimension of $M_G$ is bounded by $2 \ddim$.
	Let $H'$ be a $t'$-spanner of $M_G$ with lightness $OPT^{light}_{t'}(\mathcal{M}(n,2\ddim))$, where $t' = 1+O(\eps) < t$
	is the parameter appearing in the statement of \lemmaref{lem:GLN_second_path}, which is optimized as part of Algorithm  \texttt{Approximate-Greedy}.
	As in the proof of \lemmaref{lightopt}, we transform $H'$ into a $t'$-spanner $H''$ of $G$ of weight at most $w(H')$.
	By \observationref{mst-metric} and \observationref{mstsimple2}, the MST weights for all graphs $M,G,M_G,H'$ and $H''$ are the same, up to a factor of $t \cdot t' = O(1)$.

	We argue that every edge $e \in E \setminus E_0$ belongs to $H''$.
	Suppose for contradiction that there is an edge $e \in E \setminus E_0$ that does not belong to $H''$.
	Let $P$ be a shortest path between the endpoints of $e$ in $H''$. Since $H''$ is a $t'$-spanner of $G$, we have $w(P) \le t' \cdot w(e)$.
	Note that this path is contained in $G$. Since $e \in G$ and $M$ is a metric space, the weight of the second shortest path between the endpoints of $e$ in $G$ is at most
	$w(P) \le t' \cdot w(e)$.
	On the other hand, By \lemmaref{lem:GLN_second_path}, the weight of this path is greater than $t'\cdot w(e)$, a contradiction.
	It follows that
	\begin{eqnarray*}
		w(G) & = & w(E\setminus E_{0}) +w(E_{0}) ~\le~ w(H'')+w(E_{0})\\
		& \le & w(H')+w(E_{0}) ~=~ \left(\frac{\ddim}{t'-1}\right)^{O(\ddim)} \cdot w(MST(M))~.
	\end{eqnarray*}
\end{proof}

Setting $t = 1+\eps$ and $t' = 1+c \cdot \eps$ (for an appropriate constant $c$; see \cite{GLN02,NS07}), we conclude:
\begin{theorem}  \label{finish} \sloppy
	For any  metric  space $(M,\delta)$ in $\mathcal{M}(n,\ddim)$ and parameter $0<\epsilon<\frac{1}{2}$, Algorithm \texttt{Approximate-Greedy}
	returns a $(1+\eps)$-spanner with lightness $\left(\frac{\ddim}{\eps}\right)^{O(\ddim)}$ and degree $(1/\eps)^{O(\ddim)}$.
	The runtime of Algorithm \texttt{Approximate-Greedy} is $(1/\eps)^{O(\ddim)}(n \log n)$.
\end{theorem}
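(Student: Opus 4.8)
The plan is to obtain Theorem~\ref{finish} by specializing the parameters and then collecting three facts that have essentially already been established: the stretch and degree of the output spanner $G$, the runtime of Algorithm \texttt{Approximate-Greedy}, and the lightness bound of the preceding lemma. Concretely, I would set $t = 1+\eps$ and $t' = 1 + c\eps$, where $c \in (0,1)$ is the absolute constant dictated by the analysis of Algorithm \texttt{Approximate-Greedy} in \cite{GLN02,NS07}; this choice guarantees $1 < t' < t$ and that the two auxiliary stretch parameters $\sqrt{t/t'}$ and $\sqrt{t\cdot t'}$ are each of the form $1 + \Theta(\eps)$ and multiply to $t$.

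First I would record the stretch and the degree. The algorithm builds a bounded-degree $\sqrt{t/t'}$-spanner $G' = (M,E',\delta)$ via Theorem~\ref{thm:doubling_degree}---with stretch $1+\Theta(\eps)$, hence degree $\eps^{-O(\ddim)}$---and simulates the (approximate) greedy rule with stretch $\sqrt{t\cdot t'}$ only on a subset of the edges of $G'$. By the transitivity of spanners, the output $G = (M,E,\delta)$ is a $\sqrt{t\cdot t'}\cdot\sqrt{t/t'} = t$-spanner for $M$, i.e.\ a $(1+\eps)$-spanner; and since $G$ is a subgraph of $G'$, its degree is at most $\eps^{-O(\ddim)}$. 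Next I would invoke the runtime discussion of Section~\ref{sec:Fast_alg}: the $O(n\log n)$-time implementation of \cite{GLN02} uses only the triangle inequality and standard packing bounds (cf.\ Lemma~\ref{lem:doubling_packing}), both valid in arbitrary doubling metrics, and the construction of $G'$ via Theorem~\ref{thm:doubling_degree} costs $\eps^{-O(\ddim)}(n\log n)$ as well; hence the overall runtime is $\eps^{-O(\ddim)}(n\log n)$.

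Finally, for the lightness, the preceding lemma gives $\Psi(G) = \bigl(\tfrac{\ddim}{t'-1}\bigr)^{O(\ddim)}$. Substituting $t'-1 = c\eps = \Theta(\eps)$ and absorbing the constant $c$ into the $O(\ddim)$ exponent yields $\Psi(G) = \bigl(\tfrac{\ddim}{\eps}\bigr)^{O(\ddim)}$. The three bullets---$(1+\eps)$-spanner, degree $\eps^{-O(\ddim)}$, runtime $\eps^{-O(\ddim)}(n\log n)$, lightness $\bigl(\tfrac{\ddim}{\eps}\bigr)^{O(\ddim)}$---together are exactly the claim.

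I expect the only point requiring care to be the bookkeeping of the parameter $t'$: one must confirm that the constant $c$ coming out of \cite{GLN02} satisfies $0 < c < 1$ (so that $t' \in (1,t)$ and Lemma~\ref{lem:GLN_second_path} applies with precisely this $t'$, which is the $t'$ used in the preceding lemma), and that the induced spanner stretches $\sqrt{t/t'}$ and $\sqrt{t\cdot t'}$ are both $1+\Theta(\eps)$ so that the $\eps^{-O(\ddim)}$ degree and runtime bounds of Theorem~\ref{thm:doubling_degree} genuinely apply to $G'$. Once these dependencies are traced through, nothing further is needed---the theorem is just the conjunction of the three bounds above under the substitution $t'-1 = \Theta(\eps)$.
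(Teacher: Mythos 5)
Your proposal is correct and follows essentially the same route as the paper: Theorem~\ref{finish} is obtained there exactly by setting $t=1+\eps$, $t'=1+c\eps$ and combining the stretch/degree facts about $G\subseteq G'$, the observation that the $O(n\log n)$ runtime analysis of \cite{GLN02} carries over to doubling metrics, and the preceding lightness lemma with $t'-1=\Theta(\eps)$. The only bookkeeping point you flag (that $c$ is chosen so $1<t'<t$ and both $\sqrt{t/t'},\sqrt{t\cdot t'}$ are $1+\Theta(\eps)$) is likewise deferred in the paper to \cite{GLN02,NS07}, so nothing is missing.
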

\noindent{\bf Remark.}	\theoremref{finish} should be compared to \theoremref{thm:Got15} due to \cite{Got15}.
Both constructions achieve the same lightness bound, but the degree and number of edges in the spanner construction of \cite{Got15} are unbounded.
Moreover, the runtime of the construction of \cite{Got15} is $(\ddim/\epsilon)^{O(\ddim)} (n\log^{2}n)$, whereas that
of \theoremref{finish} is $(1/\eps)^{O(\ddim)}(n \log n)$.
By combining the light spanner $H_1$ of \cite{Got15} with a bounded degree spanner $H_2$, one can obtain a spanner with constant degree and   lightness.
Specifically, such a spanner $\mathcal{H}$ is obtained by replacing each edge of $H_1$ with a shortest (or approximately shortest) path in $H_2$ between the endpoints of that edge.
The lightness of the resulting spanner $\mathcal{H}$ will not exceed that of $H_1$ by much, whereas the degree bound  will follow from that of $H_2$.
There is a major problem with this approach: The runtime needed for computing spanner $\mathcal{H}$ may be very high.
Indeed, although there are efficient ways to estimate the weight of an approximately shortest path in $H_2$ between two points,
we must \emph{compute} the corresponding path in $H_2$.
In particular, to achieve the degree bound of $H_2$, one may not use edges outside $H_2$.
Moreover, even regardless of this computation time, such a path may contain many edges that already belong to the gradually growing spanner $\mathcal{H}$.
Deciding which edges of this path should be added to $\mathcal{H}$ may be very costly by itself.\\
By the recent result of Borradaile \etal ~\cite{BLW19}, the lightness of the spanner construction provided by   \theoremref{finish} is reduced to $(1/\epsilon)^{O(\ddim)}$.

\section{Acknowledgements}  We are grateful to Michael Elkin and Ofer Neiman for fruitful discussions.
The second-named author thanks L\'{a}szl\'{o} Babai for comments that helped improving the presentation of the paper.

\bibliographystyle{alpha}
\bibliography{bibShay,ENS14,ENS14V2}

\end{document}